\documentclass[10pt]{iopart}
\pdfoutput=1

\usepackage{iopams}

\expandafter\let\csname equation*\endcsname\relax
\expandafter\let\csname endequation*\endcsname\relax
\usepackage[english]{babel} \usepackage{amsmath} \usepackage{color}
\usepackage{epsfig} \usepackage{graphicx} \usepackage{bm}
\usepackage{mathtools}
\usepackage{amsthm}
\usepackage{times,amsmath,amssymb} \usepackage{amsfonts}
\usepackage{amssymb}

\usepackage[colorlinks,urlcolor=blue,bookmarks=false,hypertexnames=true]{hyperref}

\allowdisplaybreaks 

\newcommand{\media}[1]{\left\langle #1 \right\rangle}

\newcommand{\ket}[1]{| #1 \rangle}
\newcommand{\scp}[2]{\langle #1 | #2 \rangle}
\newcommand{\braket}[3]{\langle #1 | #2 | #3 \rangle}

\newcommand{\Span}{\mathrm{span}} %
\newcommand{\norm}[1]{\left\lVert#1\right\rVert}
\newtheorem{theorem}{Theorem}
\newtheorem{lemma}[theorem]{Lemma} 
\newtheorem{corollary}[theorem]{Corollary}
\newtheorem{definition}{Definition}

\DeclareMathOperator{\TDlim}{TD-lim}

\definecolor{myred}{RGB}{168,5,14}
\definecolor{myblue}{RGB}{13,13,255}
\definecolor{editorcolor}{RGB}{168,5,14}
\definecolor{mygreen}{RGB}{20,150,20}

\begin{document}

\title[Rigorous existence and location of QPTs in lattice Hamiltonian
systems]{Rigorous existence and location of quantum phase transitions
  in lattice Hamiltonian systems}

\author{Massimo Ostilli} \address{Instituto de F\'isica, Universidade
  Federal da Bahia, Salvador 40170-115, Brazil}

\author{Carlo Presilla} \address{Dipartimento di Matematica, Sapienza
  Universit\`a di Roma, Piazzale A. Moro 2, Roma 00185, Italy}
\address{Istituto Nazionale di Fisica Nucleare, Sezione di Roma 1,
  Roma 00185, Italy}

\vspace{10pt}
\begin{indented}
\item[]\today
\end{indented}

\begin{abstract}
  We extend the analysis of the class of quantum phase transitions
  (QPTs) that can be interpreted as condensations in state space,
  first introduced in [M. Ostilli and C. Presilla, J. Phys. A
  \textbf{54}, 055005 (2021)], by generalizing the arguments of
  [M. Ostilli and C. Presilla, Phys. Rev. Lett. \textbf{127}, 040601
  (2021)] to prove the existence and determine the location (via
  simple bounds) of QPTs in general one-parameter lattice
  Hamiltonians. Unlike our original formulation, this extension also
  encompasses second-order QPTs, for which we provide the explicit
  example of the transverse-field Ising model.  Our analysis suggests
  that, under conditions typically satisfied in physical contexts, any
  QPT taking place in lattice systems can be interpreted as a
  condensation in state space.
\end{abstract}


\section{Introduction}
Quantum phase transitions (QPTs) play a central role in many areas of
modern physics. Over the past century since the advent of quantum
mechanics, they have been investigated through a variety of
approaches, including exact solutions, perturbation theory,
renormalization-group methods, fidelity-based approaches, and
numerical simulations~\cite{SGCS,KB,Vojta,Sachdev,Fidelity,Carr}. A
common element underlying these methods is the central idea ---
originally formulated by Lee and Yang for classical
systems~\cite{LeeYang} --- that a phase transition is associated with
the breakdown of analiticity in the free energy of the system. At zero
temperature, this corresponds to a singular behavior in the
ground-state (GS) energy in the thermodynamic limit. More precisely,
when the system size tends to infinity while the particle density is
kept constant, some derivative of the GS energy per particle becomes
noncontinuous at the critical point and may even diverge. If the first
derivative exhibits a finite discontinuity, the transition is said to
be first order; otherwise, if the discontinuity appears only in higher
derivatives, the transition is classified as second order.

In recent years, we have proposed a new approach for Hamiltonian
lattice systems in which the critical point --- if present --- emerges
from the crossing of two analytic functions, called $E_\mathrm{cond}$
and $E_\mathrm{norm}$, representing the ground-state (GS) energy of
the Hamiltonian restricted to two suitable subspaces of the original
Hilbert space. One of these subspaces constitutes an infinitesimal
fraction of the full state space in the thermodynamic limit. The
underlying mechanism, first introduced in Ref.~\cite{QPTA}, allows a
QPT to be interpreted as a condensation in state space. Afterwards, we
extended and applied this approach to several
models~\cite{WC_QPT,FTQPT,FTQPT2}, all of which exhibit first-order
QPTs. In particular, in Ref.~\cite{WC_QPT} we showed that the Wigner
crystal~\cite{Wigner} forms through a first-order QPT.  In general,
calculating analytically the functions $E_\mathrm{cond}$ and
$E_\mathrm{norm}$ represents a formidable task, an issue that might
render our theoretical approach not pragmatic. More precisely, the
numerical complexity for determining $E_\mathrm{cond}$ and
$E_\mathrm{norm}$ is not different from the numerical complexity for
determining the full GS energy of the system. However, the special
arguments that we introduced in Ref.~\cite{WC_QPT} change this
perspective radically.  Here, on generalizing these analytic
arguments, we prove the existence and determine the location of QPTs
in a general one-parameter lattice Hamiltonian. In particular, unlike
our original formulation, this extension also encompasses second-order
QPTs, for which we provide the explicit example of the
transverse-field Ising model. Contrary to our original claims, our
approach appears to apply to any type of QPT; that is, the possibility
of interpreting a QPT as a condensation in state space seems to hold
in full generality. As we shall show, the concavity of the GS energy
plays a central role in our general theorem and leads to
straightforward formulas for bounding the critical point.

\section{QPTs as condensations in the state space}
In this section, we recall the mechanism leading to QPTs firstly
introduced in~\cite{QPTA} and then partially extended
in~\cite{WC_QPT}.  Let us consider a lattice model with $N$ sites and
$N_\mathrm{p}$ particles or spins described by a Hamiltonian
\begin{align}
  \label{H}
  H=K+g V,
\end{align}
where $K$ and $V$ are Hermitian non-commuting operators, and $g\geq 0$
a free dimensionless parameter. When dealing with systems of spins
located at fixed lattice positions, it will be understood that
$N_\mathrm{p}=N$. Regardless of the details of $K$ and $V$, we
represent $H$ in the eigenbasis of $V$ and it is natural to call $V$
the potential operator, and $K$ the hopping operator, for which we
assume null diagonal matrix elements in the eigenbasis of $V$.  To
exclude trivial behaviors, we suppose that the eigenvalues of $K$ and
$V$ scale linearly with the number of particles $N_\mathrm{p}$.  Since
in the two opposite limits $g\to 0$ and $g\to \infty$, the GS of the
system tends to the GS of $K$ and $V$, respectively, we wonder if, in
the thermodynamic limit, or $\TDlim$ for brevity, see later for a
precise definition, this transition occurs as a QPT taking place at
some critical value $g_\mathrm{c}$, whose location we also want to
establish, at least approximately.

In proposing to view any QPT in lattice Hamiltonian systems as a
condensation in the space of states, we approach the above issues as
follows.  We decompose the Hilbert space $\mathcal{H}$ of the system
as the direct sum of two mutually orthogonal subspaces, denoted
condensed and normal, namely,
\begin{align}
  \label{Space}
  \mathcal{H}=\mathcal{H}_\mathrm{cond} \oplus \mathcal{H}_\mathrm{norm}.
\end{align}
The definition of these subspaces is as follows.  We write
$\mathcal{H} = \Span \{ \ket{n} \}_{n=1}^{M}$, where $\{ \ket{n} \}$
(later on called configurations) is a complete orthonormal set of
eigenstates of $V$, i.e., we have $V \ket{n} =V_n \ket{n}$,
$n=1,\dots,M$, where we assume ordered, possibly degenerate, potential
(eigen-)values $V_1 \leq V_2 \leq \dots \leq V_M$. Given an integer
$M_\mathrm{cond}<M$, we then define
$\mathcal{H}_\mathrm{cond} = \Span \{ \ket{n}
\}_{n=1}^{M_\mathrm{cond}}$ and
$\mathcal{H}_\mathrm{norm} = \Span \{ \ket{n}
\}_{n=M_\mathrm{cond}+1}^{M} = \mathcal{H}_\mathrm{cond}^\perp$.  This
definition essentially relies on the choice of the dimension
$M_\mathrm{cond}$, which, in view of the ordering of the potential
values, marks the maximum potential value included in the condensed
subspace
\begin{align}
  \label{maxVcond}
  \max V_\mathrm{cond} =
  \max\{V_n:~\ket{n}\in\mathcal{H}_\mathrm{cond}\} =
  V_{M_\mathrm{cond}}.
\end{align}
Consider the GS energy of the Hamiltonian $H$ in the full Hilbert
space $\mathcal{H}$, alongside the GS energies of $H$ restricted to
the condensed and normal subspaces:
\begin{align}
  E &=\inf_{\ket{u}\in\mathcal{H}} \braket{u}{H}{u}/\scp{u}{u},
  \\
  \label{Econd}
  E_\mathrm{cond} &= \inf_{\ket{u}\in\mathcal{H}_\mathrm{cond}}
                    \braket{u}{H}{u}/\scp{u}{u},
  \\
  E_\mathrm{norm} &= \inf_{\ket{u}\in\mathcal{H}_\mathrm{norm}}
                    \braket{u}{H}{u}/\scp{u}{u}.
\end{align}
We are interested in the situations where $M_\mathrm{cond}/M\ll 1$
and, as a consequence, $M_\mathrm{norm}/M \simeq 1$, where
$M_\mathrm{norm} = (M-M_\mathrm{cond})$.  This justifies the names
\textit{condensed} and \textit{normal} assigned to the two subspaces
and suggests the following dichotomy argument: since
$\mathcal{H}\simeq\mathcal{H}_\mathrm{norm}$, we have
$E\simeq E_\mathrm{norm}$ --- unless --- it is energetically more
convenient to ``freeze'' the system into the infinitely (in the
TD-lim) smaller subspace $\mathcal{H}_\mathrm{cond}$, where we get
$E\simeq E_\mathrm{cond}$.

The above heuristic argument can be cast in rigorous terms as follows.
The $\TDlim$ is defined as the limit $N,N_\mathrm{p}\to\infty$ with
$N_\mathrm{p}/N=\varrho$ constant.  Consider the rescaled per particle
energies
\begin{definition}
  \label{RescaledEnergies}
  \begin{align}
    \epsilon(g) &= \TDlim E(N,N_\mathrm{p},g)/N_\mathrm{p},
    \\
    \epsilon_\mathrm{cond}(g)
                &=\TDlim E_\mathrm{cond}(N,N_\mathrm{p},g)/N_\mathrm{p},
    \\
    \epsilon_\mathrm{norm}(g)
                &=\TDlim E_\mathrm{norm}(N,N_\mathrm{p},g)/N_\mathrm{p},
  \end{align}
\end{definition}\noindent
which are finite in view of the assumed scaling properties of $K$ and
$V$ (dependence on $\varrho$ is left understood).  Given $N$,
$N_\mathrm{p}$, and the Hermitian operators $K$ and $V$, we shall
indicate the minimum eigenvalue of $K$ by $\min K$; the minimum
eigenvalue of $V$ as $\min V$ ($=V_1$); the maximum eigenvalue of $V$
as $\max V$ ($=V_M$); and the mean classical value of $V$ as
$\overline{V}=\sum_{n=1}^{M} V_n/M$~\cite{Note}.  These definitions
have their natural counterparts for the operators $K$ and $V$
restricted to the subspaces $\mathcal{H}_{\mathrm{cond}}$ and
$\mathcal{H}_{\mathrm{norm}}$.  For example, by $\min K_\mathrm{cond}$
we mean the smallest eigenvalue of the operator $K$ restricted to
$\mathcal{H}_{\mathrm{cond}}$, and so on.  We shall also make use of
the following per particle TD-lim related quantities
\begin{definition}
  \label{Notation}
  \begin{align}
    &\media{v}=\TDlim \frac{\media{V}}{N_\mathrm{p}}
      =\TDlim \frac{1}{N_\mathrm{p}}
      \left(\frac{1}{M} \sum_{n=1}^{M} V_n \right),
      \quad \mathrm{(classical~mean ~value~ of~} V),\\
    &\min v=\TDlim \frac{\min V}{N_\mathrm{p}},
      \quad \mathrm{(minimal~ eigenvalue~of~} V),\\
    &\max v=\TDlim \frac{\max V}{N_\mathrm{p}},
      \quad \mathrm{(maximal~ eigenvalue~of~} V),\\
    \label{maxvcond1}
    &\max v_\mathrm{cond}=\TDlim \frac{\max V_\mathrm{cond}}{N_\mathrm{p}},
      \quad \mathrm{(maximal~ eigenvalue~of~}
      V \vert_{\mathcal{H}_\mathrm{cond}}),
    \\
    &\min v_\mathrm{norm}=\TDlim \frac{\min V_\mathrm{norm}}{N_\mathrm{p}},
      \quad (\mathrm{minimal~ eigenvalue~of~}
      V \vert_{\mathcal{H}_\mathrm{norm}}) ,
    \\
    &\min k=\TDlim \frac{\min K}{N_\mathrm{p}}
      =\epsilon(0),
      \quad \mathrm{(minimal~ eigenvalue~of~} K),\\
    &\min k_\mathrm{cond}=\TDlim \frac{\min K_\mathrm{cond}}{N_\mathrm{p}}
      =\epsilon_\mathrm{cond}(0),
      \quad \mathrm{(minimal~ eigenvalue~of~}
      K \vert_{\mathcal{H}_\mathrm{cond}}),\\
    &\min k_\mathrm{norm}=\TDlim \frac{\min K_\mathrm{norm}}{N_\mathrm{p}}
      =\epsilon_\mathrm{norm}(0),
      \quad \mathrm{(minimal~ eigenvalue~of~}
      K \vert_{\mathcal{H}_\mathrm{norm}}).   
  \end{align}
\end{definition}
\noindent
Note that $\max v_\mathrm{cond}=\min v_\mathrm{norm}$ as
$\min V_\mathrm{norm}-\max V_\mathrm{cond}=o(N_\mathrm{p})$.

In Ref.~\cite{QPTA} we have proved the following
\begin{theorem}
  \label{theorem1}
  Let $\mathcal{H}$ be the Hilbert space of the system described by
  the Hamiltonian (\ref{H}) with $K$ and $V$ scaling linearly with the
  number of particles $N_\mathrm{p}$.  Let
  $(\max V_\mathrm{cond})_{N_\mathrm{p}}$ be a sequence with rescaled
  per particle $\TDlim $ given by Eq.~(\ref{maxvcond1}).  Set
  $M=\dim \mathcal{H}$ and
  $M_\mathrm{cond}=\dim \mathcal{H}_\mathrm{cond}$, where
  $\mathcal{H}_\mathrm{cond}$ is the span of the $M_\mathrm{cond}$
  eigenvectors of $V$ with eigenvalues at most $\max V_\mathrm{cond}$.
  If
  \begin{align}
    \label{QPT0}
    \TDlim {M_\mathrm{cond}/M}=0,
  \end{align}
  then
  \begin{align}
    \label{QPT1}
    \epsilon = \min\{\epsilon_\mathrm{cond},\epsilon_\mathrm{norm}\},
  \end{align}
  where $\epsilon, \epsilon_\mathrm{cond}, \epsilon_\mathrm{norm}$ are
  the rescaled per particle energies of
  Definition~\ref{RescaledEnergies}.
\end{theorem}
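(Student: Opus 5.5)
The plan is to prove two inequalities between $\epsilon$ and $\min\{\epsilon_\mathrm{cond},\epsilon_\mathrm{norm}\}$. The upper bound is immediate from the variational principle: $\mathcal{H}_\mathrm{cond}$ and $\mathcal{H}_\mathrm{norm}$ are subspaces of $\mathcal{H}$, so $E\le\min\{E_\mathrm{cond},E_\mathrm{norm}\}$ for every $N,N_\mathrm{p}$. Dividing by $N_\mathrm{p}$ and taking the $\TDlim$ gives $\epsilon\le\min\{\epsilon_\mathrm{cond},\epsilon_\mathrm{norm}\}$. All the work is in the reverse inequality, namely $E\ge\min\{E_\mathrm{cond},E_\mathrm{norm}\}-o(N_\mathrm{p})$.

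Writing the GS as $\ket{\psi}=a\ket{c}+b\ket{n}$ with $\ket{c}\in\mathcal{H}_\mathrm{cond}$ and $\ket{n}\in\mathcal{H}_\mathrm{norm}$, and using that $V$ is diagonal in the configuration basis, we get
\begin{align*}
E=|a|^2\braket{c}{H}{c}+|b|^2\braket{n}{H}{n}+2\,\mathrm{Re}\left(\bar a b\braket{c}{K}{n}\right).
\end{align*}
So the only obstruction is the off-diagonal hopping block between the two subspaces. This cross term is a priori of order $N_\mathrm{p}$, and I expect controlling it to be the main obstacle. I would not attack it with a sharp projector. Instead I would use an IMS-type localization with a smooth partition of unity that is a function of $V$.

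Concretely, fix $\delta>0$. Choose real functions $\chi_1,\chi_2$ of $V$ with $\chi_1^2+\chi_2^2=1$, where $\chi_1=1$ for $V\le\max V_\mathrm{cond}$, $\chi_1=0$ for $V\ge\max V_\mathrm{cond}+\delta N_\mathrm{p}$, and $\chi_1$ varies on the scale $\delta N_\mathrm{p}$. Since the $\chi_j$ commute with $V$, the IMS formula gives
\begin{align*}
H=\sum_j\chi_jH\chi_j-\frac12\sum_j[\chi_j,[\chi_j,K]].
\end{align*}
I will then use the locality of the lattice hopping: $K$ is a sum of $O(N_\mathrm{p})$ bounded local terms, each changing $V$ by $O(1)$. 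This should bound the double commutator by $O(N_\mathrm{p}/(\delta N_\mathrm{p})^2)=O(1/(\delta^2N_\mathrm{p}))$, which is $o(N_\mathrm{p})$ uniformly. Hence
\begin{align*}
E\ge\min\{E_\mathrm{cond}^{(\delta)},E_\mathrm{norm}\}-o(N_\mathrm{p}),
\end{align*}
where $E_\mathrm{cond}^{(\delta)}$ is the GS energy restricted to configurations with $V\le\max V_\mathrm{cond}+\delta N_\mathrm{p}$. The normal piece is supported in $\mathcal{H}_\mathrm{norm}$ because $\chi_2$ vanishes on $\mathcal{H}_\mathrm{cond}$.

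It then remains to let $\delta\to0$ after the $\TDlim$, that is, to show $\epsilon_\mathrm{cond}^{(\delta)}\to\epsilon_\mathrm{cond}$. This is a continuity statement for the restricted GS energy as a function of the per-particle threshold $\max v_\mathrm{cond}$. I would get it from monotonicity in the threshold, together with the fact that $\epsilon_\mathrm{cond}$ depends only on the rescaled threshold (consistent with the remark that $\max v_\mathrm{cond}=\min v_\mathrm{norm}$). Here the hypothesis~(\ref{QPT0}) enters. For $\delta$ small the enlarged subspace still satisfies $\TDlim M_\mathrm{cond}^{(\delta)}/M=0$, since the threshold can be shifted slightly inside the region where the density of states of $V$ is exponentially small relative to $M$. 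Also, the strip between the two thresholds is where $\chi_1$ and $\chi_2$ overlap, and it is an infinitesimal fraction of the space, so neither restricted GS energy can gain an extensive amount from it.

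If the continuity step resists, my fallback is a finite-temperature route. I would compare $\mathrm{Tr}\,e^{-\beta H}$ with $\mathrm{Tr}\,e^{-\beta H_\mathrm{cond}}+\mathrm{Tr}\,e^{-\beta H_\mathrm{norm}}$ via a path (Dyson) expansion in $K$, and bound the paths crossing the interface using the dimension ratio $M_\mathrm{cond}/M\to0$. I would then send $\beta\to\infty$ after the $\TDlim$.
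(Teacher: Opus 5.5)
\textbf{Verdict: there is a genuine gap.} Your upper bound and the IMS step are sound under the locality assumption you add. One sign slip: the identity is $H=\sum_j\chi_jH\chi_j+\frac12\sum_j[\chi_j,[\chi_j,K]]$, with a plus sign, but only the norm of the double commutator matters.

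The gap is the last step, $\epsilon^{(\delta)}_\mathrm{cond}\to\epsilon_\mathrm{cond}$ as $\delta\to0^+$, on which the whole lower bound rests. None of the reasons you give establish it:
\begin{itemize}
\item Monotonicity in the threshold only gives $\epsilon^{(\delta)}_\mathrm{cond}\le\epsilon_\mathrm{cond}$, which is the wrong direction, and a monotone function of $\max v_\mathrm{cond}$ can jump.
\item That $\epsilon_\mathrm{cond}$ depends only on the rescaled threshold does not help, because $\delta$ is a macroscopic shift of that rescaled threshold.
\item Dimension counting cannot work. Dimension ratios do not control ground-state energies; the whole point of the theorem is that the GS may live in a vanishing fraction of $\mathcal{H}$. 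Moreover, the strip $\{\max V_\mathrm{cond}<V\le\max V_\mathrm{cond}+\delta N_\mathrm{p}\}$ is typically exponentially \emph{larger} than $\mathcal{H}_\mathrm{cond}$ (e.g.\ in the Ising example). So $\mathcal{H}^{(\delta)}_\mathrm{cond}$ is not a small enlargement of $\mathcal{H}_\mathrm{cond}$.
\item Condition~(\ref{QPT0}) does not enter either. Your IMS bound never uses it, and whether $M^{(\delta)}_\mathrm{cond}/M\to0$ says nothing about $E^{(\delta)}_\mathrm{cond}-E_\mathrm{cond}$.
\end{itemize}
What you have actually proved is $\epsilon\ge\min\{\lim_{\delta\to0^+}\epsilon^{(\delta)}_\mathrm{cond},\epsilon_\mathrm{norm}\}$.

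You can reduce the problem by letting the strip width $w=\delta N_\mathrm{p}$ depend on $N_\mathrm{p}$ with $w\to\infty$ and $w=o(N_\mathrm{p})$. The IMS error $O(N_\mathrm{p}/w^2)$ is still $o(N_\mathrm{p})$, and the shifted threshold has the same rescaled limit. You would then still have to show that the restricted energy density is unchanged by $o(N_\mathrm{p})$ shifts of the threshold sequence. That is a genuine property of $H$, not a consequence of counting states. For translation-invariant local models and interior thresholds it can plausibly be obtained from convexity of the restricted energy density in the threshold, by gluing two subsystems.

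Two further points. First, your locality hypothesis (every term of $K$ changes $V$ by $O(1)$) is not among the assumptions. It fails for the paper's own Grover example: there $V=-N\ket{n_0}\langle n_0|$ and a single spin flip changes $V$ by $N$, so the double-commutator estimate breaks down (although in that case the cross term is directly $O(\sqrt{N})$). Some structural hypothesis on $K$ beyond linear scaling is in fact unavoidable. A $K$ whose only nonzero entries are $-N_\mathrm{p}$ between one condensed and one normal configuration gives $\epsilon(0)=-1$ while $\epsilon_\mathrm{cond}(0)=\epsilon_\mathrm{norm}(0)=0$. So adding a hypothesis is legitimate, but you should state it and note what it excludes.

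Second, the paper does not reprove Theorem~\ref{theorem1}. It refers to the functional-analytic proof of Ref.~\cite{QPTA} and to the dynamical and finite-temperature versions in Refs.~\cite{FTQPT,FTQPT2}. Your fallback points toward the latter, but ``bound the paths crossing the interface using the dimension ratio'' is a placeholder for exactly the estimate that carries the proof. It does not close the gap either.
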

The proof derived in~\cite{QPTA} is based on functional analysis but
we have proved and generalized Theorem~\ref{theorem1} also via a
dynamical approach, and even extended it to finite temperature where
the energies $\epsilon$, $\epsilon_\mathrm{cond}$ and
$\epsilon_\mathrm{norm}$ are replaced by the corresponding free
energies~\cite{FTQPT,FTQPT2}.  Theorem~\ref{theorem1} establishes the
possibility, not yet the realization, of a QPT between a normal phase
characterized by the energy per particle $\epsilon_\mathrm{norm}$,
obtained by removing from $\mathcal{H}$ the infinitely smaller
subspace $\mathcal{H}_\mathrm{cond}$, and a condensed phase
characterized by the energy per particle $\epsilon_\mathrm{cond}$,
obtained by restricting the system to $\mathcal{H}_\mathrm{cond}$.

The situation is particularly simple for systems characterized by a
single Hamiltonian parameter as in the case of Eq.~(\ref{H}).  If
condition~(\ref{QPT0}) holds and, moreover, the functions
$\epsilon_\mathrm{norm}(g)$ and $\epsilon_\mathrm{cond}(g)$ are such
that the equation
\begin{align}
  \label{QCP}
  \epsilon_\mathrm{norm}(g)=\epsilon_\mathrm{cond}(g)
\end{align}
admits a unique \textit{finite} solution $g=g_\mathrm{c}$, then
Eq.~(\ref{QPT1}) provides
\begin{align}
  \label{QPT2}
  \epsilon(g) = \left\{
  \begin{array}{ll}
    \epsilon_\mathrm{norm}(g), \qquad&g<g_\mathrm{c},
    \\
    \epsilon_\mathrm{cond}(g), \qquad&g>g_\mathrm{c}.
  \end{array}
  \right.
\end{align}
Equations~(\ref{QCP}-\ref{QPT2}) imply the existence of a QPT at the
critical point $g_\mathrm{c}$ where the system splits between the
normal ($g<g_\mathrm{c}$) and condensed ($g>g_\mathrm{c}$) phases. In
Ref.~\cite{QPTA} we claimed this QPT to be first-order due to the fact
that, although in general $\epsilon_\mathrm{cond}(g)$ and
$\epsilon_\mathrm{norm}(g)$ are separately analytic also at
$g=g_\mathrm{c}$, they are different functions, so that, while
$\epsilon(g)$ is continuous at $g=g_\mathrm{c}$, its first derivative
was expected to undergo the discontinuity
$|\epsilon_\mathrm{cond}'(g_\mathrm{c})-
\epsilon_\mathrm{norm}'(g_\mathrm{c})|>0$.  This is in fact the case
for the models we focused on in Ref.~\cite{QPTA}, where
$M_{\mathrm{cond}}$ was equal to the degeneracy of $\min V$, i.e.,
$\mathcal{H}_{\mathrm{cond}}$ was defined as the subspace spanned by
the eigenstates of V with minimal eigenvalue.  In these cases, the
energy $\epsilon_\mathrm{cond}(g)$ does not exhibits any kinetic
contribution,
$\langle
E_\mathrm{cond}(N,N_\mathrm{p},g)|K|E_\mathrm{cond}(N,N_\mathrm{p},g)
\rangle =0$, so that $\epsilon_\mathrm{cond}(g)$ and
$\epsilon_\mathrm{norm}(g)$ have different functional dependence on
$g$ and a different first-order derivative at the crossing point
$g_\mathrm{c}$.  However, later, in Refs.~(\cite{WC_QPT,FTQPT,FTQPT2}
we have relaxed such a rigid and limitative definition of
$\mathcal{H}_{\mathrm{cond}}$.  In this more general class of QPTs,
there is no reason to assume that
$|\epsilon_\mathrm{cond}'(g_\mathrm{c})-
\epsilon_\mathrm{norm}'(g_\mathrm{c})|>0$, i.e., the QPT could also be
a second-order transition.  In general, our approach is not able to
detect \textit{a priori} whether the QPT is first- or second-order.
At any rate, in the $\TDlim$, due to Eq. (\ref{Space}), our class of
QPTs always implies the orthogonality between any pair of GSs where
one is in the condensed and the other in the normal phase, namely
\begin{align}
  \label{Ortho}
  \TDlim  ~\langle E (N,N_\mathrm{p},g)|E (N,N_\mathrm{p},g')\rangle =0,
  \quad g<g_\mathrm{c},\quad g'>g_\mathrm{c}.
\end{align}
Eq. (\ref{Ortho}) is fully consistent with the so called ``fidelity
approach'' to QPTs~\cite{Fidelity}.

Additionally, Eq.~(\ref{QCP}) could have more than one solution and
Eq.~(\ref{QPT2}) should properly be rewritten.  For example, if there
exist two critical points, $g_\mathrm{c}$ and $g'_\mathrm{c}$, with
$g_\mathrm{c}<g'_\mathrm{c}$, Eq.~(\ref{QPT2}) must be replaced by
\begin{align}
  \label{QPT2b}
  \epsilon(g) = \left\{
  \begin{array}{ll}
    \epsilon_\mathrm{norm}(g), \qquad& g<g_\mathrm{c},
    \\
    \epsilon_\mathrm{cond}(g), \qquad& g_\mathrm{c}<g<g'_\mathrm{c},
    \\
    \epsilon_\mathrm{norm}(g), \qquad& g'_\mathrm{c}<g.
  \end{array}
  \right.
\end{align}
In the next two sections we shall focus on cases in which
Eq.~(\ref{QCP}) admits at least one solution while later we shall
discuss the general case.

Summarizing, if we find a partition
$\mathcal{H}=\mathcal{H}_\mathrm{cond} \oplus
\mathcal{H}_\mathrm{norm}$ such that {condition}~(\ref{QPT0}) and
Eq.~(\ref{QCP}) are satisfied, then a QPT occurs at some
$g=g_\mathrm{c}$.  In general, such a partition is not unique. In
fact, for Eq.~(\ref{QCP}) to admit a solution with condition
(\ref{QPT0}) satisfied, $\mathcal{H}_\mathrm{cond}$ can invariantly be
chosen provided that it is \textit{not too small} and \textit{not too
  large} in such a way that neither of the two restrictions of $H$, to
$\mathcal{H}_\mathrm{cond}$ and to $\mathcal{H}_\mathrm{norm}$, have a
QPT. In this case, $\epsilon_\mathrm{cond}$ and
$\epsilon_\mathrm{norm}$ are both analytic functions of $g$ at
$g=g_\mathrm{c}$, whereas $\epsilon$ is not.  It must be in fact
considered that, under certain extreme choices of
$\mathcal{H}_\mathrm{cond}$, Eq.~(\ref{QCP}) might have no solution
even if the system owns an actual QPT. This occurs when the chosen
value of $\TDlim \max V_\mathrm{cond}/N_\mathrm{p}$ is either too
small or too large. Note that the latter ``too large'' case can occur
even if it satisfies the necessary condition (\ref{QPT0}). Because
Eq.~(\ref{QCP}) has no solution, this leads to
$\epsilon(g)=\epsilon_\mathrm{norm}(g)$ for any $g$ in the former
case, or to $\epsilon(g)=\epsilon_\mathrm{cond}(g)$ for any $g$ in the
latter. Such extreme choices of
$\TDlim \max V_\mathrm{cond}/N_\mathrm{p}$ prevent the detection of a
putative QPT in terms of a condensation in the space of states and, in
order to avoid them, we introduce the following restrictions on the
possible choices of the condensed space:
\begin{definition}
  \label{upperF}
  Among all possible choices of $\max v_\mathrm{cond}$ such that
  $\TDlim M_\mathrm{cond}/M=0$, we call $\overline{v}_\mathrm{cond}$
  the largest value of $\max v_\mathrm{cond}$ for which
  $\epsilon_\mathrm{norm}(0)=\epsilon_\mathrm{cond}(0)$, i.e.,
  $\epsilon_\mathrm{norm}(0)>\epsilon_\mathrm{cond}(0)$ for
  $\max v_\mathrm{cond}>\overline{v}_\mathrm{cond}$. If instead
  $\epsilon_\mathrm{norm}(0)<\epsilon_\mathrm{cond}(0)$ for any value
  of $\max v_\mathrm{cond}$, we set
  $\overline{v}_\mathrm{cond}=\media{v}$.
\end{definition}
\begin{definition}
  \label{lowerF}
  Among all possible choices of $\max v_\mathrm{cond}$ such that
  $\TDlim M_\mathrm{cond}/M=0$, we call $\underline{v}_\mathrm{cond}$
  the lowest value of $\max v_\mathrm{cond}$, namely,
  $\max v_\mathrm{cond}=\min v\nonumber$.  We also set
  $\underline{k}_\mathrm{cond}= \min k_\mathrm{cond}$.
\end{definition}
In the last line of Definition~\ref{lowerF}, we characterize the
smallest choice of $\mathcal{H}_\mathrm{cond}$ via its kinetic energy.

\section{Existence and location of a QPT}
Whereas Eq.~(\ref{QPT0}) can be checked easily, the existence of a
finite solution to Eq.~(\ref{QCP}) and its location can be made by
using the following approach, firstly developed in Ref.~\cite{WC_QPT}
for a specific Hamiltonian, that here we generalize to any Hamiltonian
$H$ of the form (\ref{H}).

For $N,N_\mathrm{p}$ finite with $N_\mathrm{p}/N=\varrho$ constant, we
evaluate $g_\mathrm{cross}(N,N_\mathrm{p})$ defined as as the value of
the parameter $g$, if any, solution of the equation
\begin{align}
  \label{QCPfinite}
  E_\mathrm{norm}(N,N_\mathrm{p},g)=E_\mathrm{cond}(N,N_\mathrm{p},g).
\end{align}
Assuming a smooth limiting behavior, we expect
\begin{align}
  \label{gc_gcross}
  g_\mathrm{c} = \TDlim g_\mathrm{cross}(N,N_\mathrm{p}).
\end{align}

Note that, for finite sizes, different partitions
$\mathcal{H}=\mathcal{H}_\mathrm{cond} \oplus
\mathcal{H}_\mathrm{norm}$ lead, in general, to different values of
both $E_\mathrm{cond}(g)$ and $E_\mathrm{norm}(g)$.  Only in the
$\TDlim$ different invariant partitions of $\mathcal{H}$ lead to the
same values of $\epsilon_\mathrm{cond}(g)$ for $g>g_\mathrm{c}$ and
$\epsilon_\mathrm{norm}(g)$ for $g<g_\mathrm{c}$, namely, to a unique
$\epsilon(g)$, as indicated by Eq.~(\ref{QPT2}).  In the following, we
shall exploit this invariance, i.e., the invariance with respect to
the choice of $\max v_\mathrm{cond}$, with
$\underline{v}_\mathrm{cond}<\max v_\mathrm{cond}\leq
\overline{v}_\mathrm{cond}$, to prove that a QPT exists and derive
rigorous bounds to $g_\mathrm{c}$.  To this aim, we need first to
provide a few definitions and a lemma.

\begin{definition}
  \label{Selfave}
  Let $P(V)$ be the (probability) distribution of the eigenvalues of
  $V$
  \begin{align}
    \label{Vdistribution}
    P(V) = \frac{1}{M} \sum_{n=1}^M \delta_{V_n,V}. 
  \end{align}
  We say that the potential $V$ is self-averaging if
  \begin{align}
    \label{SelfaveP}
    \TDlim P\left(\frac{V_n}{N_\mathrm{p}}\right) =
    \delta \left(\TDlim\frac{V_n}{N_\mathrm{p}}-
    \TDlim\frac{\media{V}}{N_\mathrm{p}}\right).
  \end{align}
\end{definition}
In some cases, it is possible to prove the self-averaging
property~(\ref{SelfaveP}) analytically as, for example, in a system of
noninteracting fermions subjected to a heterogeneous external field
like $V=\sum^{N/2}_{i=1}c^{\dag}_i c_i$, where $c_i$ stands for the
annihilation operator acting on site $i$, or else in a system of
fermions interacting via a screened Coulomb potential, as proved in
Ref.~\cite{WC_QPT}, but also in spin systems with Ising interactions,
as we shall show in detail later as a benchmark application. In
systems with more complicated potentials, the self-averaging property
can be checked numerically quite easily, as the distribution $P(V)$ of
the potential values can be straightforwardly evaluated for huge
system sizes.  In general, the self-averaging property is expected to
be quite common in physical systems, i.e., in systems where $V$
satisfies the basic requirement that the number of its distinct
eigenvalues grows linearly with $N$ while the state space, i.e., the
total number of eigenvalues of $V$, grows exponentially with $N$, so
that the relative variance of $V$ vanishes in the TD-lim. Note that,
in general, a spin-glass model may not fall into this class.
 
Also the evaluation of the parameters $\overline{v}_\mathrm{cond}$,
$\underline{v}_\mathrm{cond}$, and $\underline{k}_{\mathrm{cond}}$
does not constitute a challenging numerical problem since these
quantities can be derived by a Monte Carlo simulation of the
noninteracting system with $g=0$.

Clearly, if $V$ is self-averaging, condition (\ref{QPT0}) turns out to
be satisfied if, and only if, $\max v_\mathrm{cond}<\media{v}$, which
implies also $\overline{v}_\mathrm{cond}\leq \media{v}$.  On combining
this with definition \ref{upperF}, Theorem \ref{theorem1} via
Eq. (\ref{QPT2}) gives the following
\begin{lemma}
  \label{lemma}
  Let $V$ be self-averaging and set
  $\max v_\mathrm{cond} <\overline{v}_\mathrm{cond}$, then
  $\TDlim M_\mathrm{cond}/M=0$ and
  $\min k_\mathrm{norm}=\min k<\min k_\mathrm{cond}$.
\end{lemma}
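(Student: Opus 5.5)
The plan has two parts: first the dimension statement, then the ordering of the kinetic minima.

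\emph{First part.} Set $\max v_\mathrm{cond}<\overline{v}_\mathrm{cond}$. Since $V$ is self-averaging, the remark preceding the lemma gives $\overline{v}_\mathrm{cond}\leq\media{v}$, hence $\max v_\mathrm{cond}<\media{v}$. I will then show directly that this forces $\TDlim M_\mathrm{cond}/M=0$. Note that $M_\mathrm{cond}/M$ is exactly the $P$-mass of the set $\{V_n/N_\mathrm{p}\leq \max V_\mathrm{cond}/N_\mathrm{p}\}$. By Eq.~(\ref{SelfaveP}) the distribution of $V_n/N_\mathrm{p}$ concentrates on $\media{v}$. So for any $\delta>0$ with $\max v_\mathrm{cond}+\delta<\media{v}$, the mass below $\max v_\mathrm{cond}+\delta$ tends to zero. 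Since $\max V_\mathrm{cond}/N_\mathrm{p}$ eventually lies below $\max v_\mathrm{cond}+\delta$, condition (\ref{QPT0}) follows and Theorem~\ref{theorem1} applies.

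\emph{Second part.} The equality $\min k_\mathrm{norm}=\min k$ is Theorem~\ref{theorem1} evaluated at $g=0$, where $H=K$. Indeed, Eq.~(\ref{QPT1}) gives $\epsilon(0)=\min\{\epsilon_\mathrm{cond}(0),\epsilon_\mathrm{norm}(0)\}$, and in the notation of Definition~\ref{Notation} this reads $\min k=\min\{\min k_\mathrm{cond},\min k_\mathrm{norm}\}$. It therefore suffices to prove the strict inequality $\epsilon_\mathrm{cond}(0)>\epsilon_\mathrm{norm}(0)$ whenever $\max v_\mathrm{cond}<\overline{v}_\mathrm{cond}$.

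\emph{Strict inequality.} This follows from Definition~\ref{upperF}. There are two cases.
\begin{enumerate}
\item If $\epsilon_\mathrm{norm}(0)<\epsilon_\mathrm{cond}(0)$ for every admissible $\max v_\mathrm{cond}$, so that $\overline{v}_\mathrm{cond}=\media{v}$, the inequality holds by assumption.
\item Otherwise, $\overline{v}_\mathrm{cond}$ is the largest value at which $\epsilon_\mathrm{norm}(0)=\epsilon_\mathrm{cond}(0)$. Here I use monotonicity of the restricted minima in $\max v_\mathrm{cond}$. Enlarging $\mathcal{H}_\mathrm{cond}$ shrinks $\mathcal{H}_\mathrm{norm}$, because the subspaces are nested as $\max V_\mathrm{cond}$ grows. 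By the variational characterization (\ref{Econd}), $\min K_\mathrm{cond}$ is therefore nonincreasing and $\min K_\mathrm{norm}$ nondecreasing in $\max V_\mathrm{cond}$. Hence $\epsilon_\mathrm{cond}(0)-\epsilon_\mathrm{norm}(0)$ is nonincreasing in $\max v_\mathrm{cond}$. It vanishes at $\overline{v}_\mathrm{cond}$ and is negative beyond it.
\end{enumerate}

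\emph{Main obstacle.} The difficulty lies in case 2. Monotonicity alone only gives $\epsilon_\mathrm{cond}(0)\geq\epsilon_\mathrm{norm}(0)$ below $\overline{v}_\mathrm{cond}$, whereas the lemma needs a strict inequality. I would handle this by reading Definition~\ref{upperF} as specifying that $\overline{v}_\mathrm{cond}$ is the point where the equality first sets in. In other words, I take the crossing at $g=0$ to be the boundary between the regimes $\epsilon_\mathrm{cond}(0)>\epsilon_\mathrm{norm}(0)$ and $\epsilon_\mathrm{cond}(0)\leq\epsilon_\mathrm{norm}(0)$. If that reading is not granted, I would fall back on the physical fact that restricting $K$ to the infinitesimal fraction $\mathcal{H}_\mathrm{cond}$ strictly raises the kinetic energy per particle away from the threshold. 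Making this rigorous is the delicate step, and I expect the proof to rely on the definition rather than derive it.
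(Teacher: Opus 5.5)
Your proposal is correct and follows the paper's own one-line derivation. Self-averaging gives $\overline{v}_\mathrm{cond}\leq\media{v}$ and hence condition~(\ref{QPT0}); Definition~\ref{upperF} supplies $\epsilon_\mathrm{norm}(0)<\epsilon_\mathrm{cond}(0)$; and Theorem~\ref{theorem1} at $g=0$ then yields $\min k_\mathrm{norm}=\min k<\min k_\mathrm{cond}$. The strictness issue you flag is real, but the paper settles it exactly as you do: it takes the strict inequality below $\overline{v}_\mathrm{cond}$ as part of the meaning of Definition~\ref{upperF} rather than deriving it, and your monotonicity argument for nested subspaces is an extra not present in the paper.
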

We are now ready to state our main result.
\begin{theorem}
  \label{theorem-main}
  Let $K$ and $V$ scale linearly with the number of particles
  $N_\mathrm{p}$ and $V$ be self-averaging. Then, in the $\TDlim$, the
  system with Hamiltonian $H=K+gV$ undergoes at least one QPT at a
  critical point $g=g_\mathrm{c}$ such that
  $g^-_\mathrm{c} \leq g_\mathrm{c} \leq g^+_\mathrm{c}$, where
  \begin{subequations}
    \label{ResultC}
    \begin{align}
      & g^-_\mathrm{c} =
        \frac{{\underline{k}_{\mathrm{cond}}}-\min k}{\max v - \min v}, 
      \\
      \label{ResultD}
      & g^+_\mathrm{c} =
        \frac{-\min k}{{\overline{v}_\mathrm{cond}}-\min v}.
    \end{align}
  \end{subequations}
\end{theorem}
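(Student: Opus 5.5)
The strategy is to obtain, for suitable choices of the condensed subspace, a linear upper bound on one of $\epsilon_\mathrm{cond}$, $\epsilon_\mathrm{norm}$ and a linear lower bound on the other. Then Eq.~(\ref{QCP}) must have a solution in a bracketing interval, and Theorem~\ref{theorem1} via Eq.~(\ref{QPT2}) turns this crossing into a QPT. The key tools are the concavity of ground-state energies in $g$ (each is an infimum of affine functions $\braket{u}{K}{u}+g\braket{u}{V}{u}$), the variational principle, and the spectral bounds $\min V\le\braket{u}{V}{u}\le\max V$ on normalized states. Lemma~\ref{lemma} supplies the ordering at $g=0$: for $\max v_\mathrm{cond}<\overline{v}_\mathrm{cond}$ we have $\TDlim M_\mathrm{cond}/M=0$ and $\epsilon_\mathrm{norm}(0)=\min k<\min k_\mathrm{cond}=\epsilon_\mathrm{cond}(0)$. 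Hence $\epsilon_\mathrm{norm}<\epsilon_\mathrm{cond}$ near $g=0$, and it remains to show that the order is reversed at a finite $g$.

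\textbf{Upper bound $g^+_\mathrm{c}$.} For $\ket{u}\in\mathcal{H}_\mathrm{norm}$ we have $\braket{u}{V}{u}\ge\min V_\mathrm{norm}$ and $\braket{u}{K}{u}\ge\min K$. Dividing by $N_\mathrm{p}$ and taking the $\TDlim$ gives $\epsilon_\mathrm{norm}(g)\ge\min k+g\max v_\mathrm{cond}$, using the remark after Definition~\ref{Notation} that $\max v_\mathrm{cond}=\min v_\mathrm{norm}$. For the condensed side, take the eigenvector(s) of $V$ with eigenvalue $\min V$, which lie in $\mathcal{H}_\mathrm{cond}$ and have null diagonal kinetic energy by the assumption on $K$. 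This gives $\epsilon_\mathrm{cond}(g)\le g\min v$. Therefore $\epsilon_\mathrm{cond}(g)<\epsilon_\mathrm{norm}(g)$ as soon as $g(\max v_\mathrm{cond}-\min v)>-\min k$. Letting $\max v_\mathrm{cond}\uparrow\overline{v}_\mathrm{cond}$, a crossing must occur at some $g\le -\min k/(\overline{v}_\mathrm{cond}-\min v)=g^+_\mathrm{c}$. Here $\min k<0$ because $K$ is traceless (zero diagonal), so $g^+_\mathrm{c}>0$. By continuity of the concave functions there is a solution $g_\mathrm{c}\in(0,g^+_\mathrm{c}]$, and Eq.~(\ref{QPT2}) yields the QPT.

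\textbf{Lower bound $g^-_\mathrm{c}$.} This step uses the invariance of the transition point with respect to the choice of $\max v_\mathrm{cond}$, taken down to the smallest choice $\underline{v}_\mathrm{cond}=\min v$ of Definition~\ref{lowerF}. For any normalized $\ket{u}\in\mathcal{H}_\mathrm{cond}$ we have $\braket{u}{K}{u}\ge\min K_\mathrm{cond}$ and $\braket{u}{V}{u}\ge\min V$, so $\epsilon_\mathrm{cond}(g)\ge\underline{k}_\mathrm{cond}+g\min v$. For the normal side, the GS of $K$ (essentially in $\mathcal{H}_\mathrm{norm}$ in the limit, since $\min k_\mathrm{norm}=\min k$) gives $\epsilon_\mathrm{norm}(g)\le\min k+g\max v$. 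Hence $\epsilon_\mathrm{norm}<\epsilon_\mathrm{cond}$ whenever $g(\max v-\min v)<\underline{k}_\mathrm{cond}-\min k$, so no crossing occurs for $g<g^-_\mathrm{c}$. Concavity of $\epsilon$ then ensures that the transition found above cannot lie below $g^-_\mathrm{c}$.

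\textbf{Main obstacle.} The hard step is justifying that the same $g_\mathrm{c}$ is obtained for the extreme partition of Definition~\ref{lowerF} and for the near-$\overline{v}_\mathrm{cond}$ partitions, i.e., the invariance claimed after Eq.~(\ref{gc_gcross}). The minimal choice may have $M_\mathrm{cond}$ only subexponential but nonzero, and its $\underline{k}_\mathrm{cond}$ must be used with the correct $\TDlim$ ordering. I would first try to handle this through monotonicity: enlarging $\mathcal{H}_\mathrm{cond}$ lowers $\epsilon_\mathrm{cond}$ and raises $\epsilon_\mathrm{norm}$. Combined with the identity $\epsilon=\min\{\epsilon_\mathrm{cond},\epsilon_\mathrm{norm}\}$ for every admissible partition, this would show that the first nonanalyticity of $\epsilon$ is sandwiched between the crossing points of the smallest and largest partitions.
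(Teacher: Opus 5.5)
\textbf{Overall.} Your proposal follows the paper's route. You use the same linear majorants and minorants: the paper's Eqs.~(\ref{Econd+})--(\ref{Enorm+}), with $\min K$ in place of $\min K_\mathrm{norm}$, which Lemma~\ref{lemma} makes immaterial in the $\TDlim$. You take the largest admissible $\mathcal{H}_\mathrm{cond}$ for $g^+_\mathrm{c}$ and the smallest for $g^-_\mathrm{c}$. You then glue the two together through the invariance of the crossing point under the choice of partition. Your existence argument (ordering at $g=0$ from Lemma~\ref{lemma}, reversal beyond the explicit linear crossing, then continuity of the concave functions) is a slightly more direct substitute for the paper's criterion (i)--(ii) based on concavity and asymptotic slopes. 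It also delivers existence and $g^+_\mathrm{c}$ in one stroke.

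\textbf{The gap: gluing the partitions.} Both justifications you offer for this step fail.
\begin{itemize}
\item Concavity of $\epsilon$ says nothing about where the crossing of one partition lies relative to that of another.
\item Monotonicity points the wrong way. If $\mathcal{H}_\mathrm{cond}\subset\mathcal{H}'_\mathrm{cond}$, then $\epsilon'_\mathrm{cond}\le\epsilon_\mathrm{cond}$ and $\epsilon'_\mathrm{norm}\ge\epsilon_\mathrm{norm}$, so the larger partition crosses \emph{earlier}. Hence $g^{\max}_\mathrm{cross}\le g^{(j)}_\mathrm{cross}\le g^{\min}_\mathrm{cross}$. Your estimates give only $g^{\max}_\mathrm{cross}\le g^+_\mathrm{c}$ and $g^{\min}_\mathrm{cross}\ge g^-_\mathrm{c}$, so each bound controls the wrong end of the sandwich. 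In particular, the transition found in your upper-bound step is the earliest crossing, which is exactly the one that could fall below $g^-_\mathrm{c}$.
\end{itemize}

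What you actually need is that all admissible partitions yield the same critical point in the $\TDlim$. The paper does not derive this; it takes it as an input.

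\textbf{How to state it correctly.} The honest formulation goes through analyticity, not concavity. By Theorem~\ref{theorem1}, $\epsilon=\min\{\epsilon_\mathrm{cond},\epsilon_\mathrm{norm}\}$ for every admissible partition.
\begin{itemize}
\item For the minimal partition, your bounds show that $\epsilon$ coincides with that partition's $\epsilon_\mathrm{norm}$ on $[0,g^-_\mathrm{c})$. If that restricted energy is analytic there, no QPT can occur below $g^-_\mathrm{c}$.
\item For a near-maximal partition, analyticity of its two restricted energies makes its crossing a genuine nonanalyticity of $\epsilon$. That crossing lies no later than $-\min k/(\max v_\mathrm{cond}-\min v)\to g^+_\mathrm{c}$.
\end{itemize}
State these analyticity assumptions explicitly instead of appealing to concavity or monotonicity.

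\textbf{Minor fix.} For $\epsilon_\mathrm{norm}(g)\le\min k+g\max v$, use the minimizer of $K$ \emph{within} $\mathcal{H}_\mathrm{norm}$. Weyl's inequality gives $E_\mathrm{norm}\le\min K_\mathrm{norm}+g\max V_\mathrm{norm}$, and Lemma~\ref{lemma} then finishes it. The ground state of $K$ itself does not lie in $\mathcal{H}_\mathrm{norm}$.
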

\begin{proof}
  Let us choose an arbitrary partition
  $\mathcal{H}=\mathcal{H}_\mathrm{cond}\oplus
  \mathcal{H}_\mathrm{norm}$ such that
  $\max v_\mathrm{cond}< \overline{v}_\mathrm{cond}$.  By Lemma
  \ref{lemma}, the self-averaging property implies that
  condition~(\ref{QPT0}) is satisfied.
 
  To comply with Eq.~(\ref{QCP}), we exploit the concavity of
  $\epsilon(g)$. The concavity of the GS energy holds for any
  Hamiltonian of the form (\ref{H}) and for any size as can be proven
  from the variational principle
  $E(g)=\min_{\psi\in \mathcal{H}:~\norm{\psi} =1} \langle \psi |H|
  \psi\rangle$ (for shortening the notation here and below we drop the
  dependence on $N$ and $N_\mathrm{p}$ of the GS energy of the system
  of finite size).  In fact, let $a_1$ and $a_2$ be two real non
  negative numbers such that $a_1+a_2=1$.  The linearity of the
  operator $H$ in $g$ and the inequality
  $\min (A + B)\geq \min A + \min B$, valid for any pair of Hermitian
  operators $A$ and $B$, imply
  \begin{align}
    \label{Concavity2}
    E (a_1 g_1+a_2 g_2)
    &= \min_{\psi\in \mathcal{H}:~\norm{\psi} =1}
      \langle \psi |a_1(K+g_1 V)+a_2(K+g_2 V)| \psi\rangle
      \nonumber \\
    & \geq a_1 E (g_1)+ a_2 E (g_2), \qquad \forall ~g_1, ~g_2.
  \end{align}
  An identical formula also applies to the GS energies of the
  Hamiltonian restricted to the subspaces $\mathcal{H}_\mathrm{cond}$
  and $\mathcal{H}_\mathrm{norm}$.  In other words, $E(g)$,
  $E_\mathrm{cond}(g)$, and $E_\mathrm{norm}(g)$ are concave functions
  of $g$ and so are their $\TDlim$ $\epsilon(g)$,
  $\epsilon_\mathrm{cond}(g)$, and $\epsilon_\mathrm{norm}(g)$. Note
  that this fact implies a strong constraint between
  $\epsilon_\mathrm{cond}(g)$ and $\epsilon_\mathrm{norm}(g)$: if
  $g_\mathrm{c}$ is a crossing point such that
  $\epsilon_\mathrm{norm}(g)<\epsilon_\mathrm{cond}(g)$ when
  $g<g_\mathrm{c}$ and
  $\epsilon_\mathrm{cond}(g)<\epsilon_\mathrm{norm}(g)$ when
  $g_\mathrm{c}<g$, for $\epsilon(g)$ to be concave it must be
  $\epsilon'_\mathrm{norm}(g_\mathrm{c})\geq
  \epsilon'_\mathrm{cond}(g_\mathrm{c})$.  This structure, as pictured
  for example in Fig.~\ref{sketch_V_pos},
  implies that there exists a finite solution $g_\mathrm{c}$ of
  Eq.~(\ref{QCP}) if and only if the following two conditions are met:
  (i) $\epsilon_\mathrm{norm}(0) < \epsilon_\mathrm{cond}(0)$ and (ii)
  $\lim_{g\to\infty} \epsilon_\mathrm{cond}(g)/g < \lim_{g\to\infty}
  \epsilon_\mathrm{norm}(g)/g$.
  \begin{figure}[tb]
    \centering
    \includegraphics[width=0.50\columnwidth,clip]{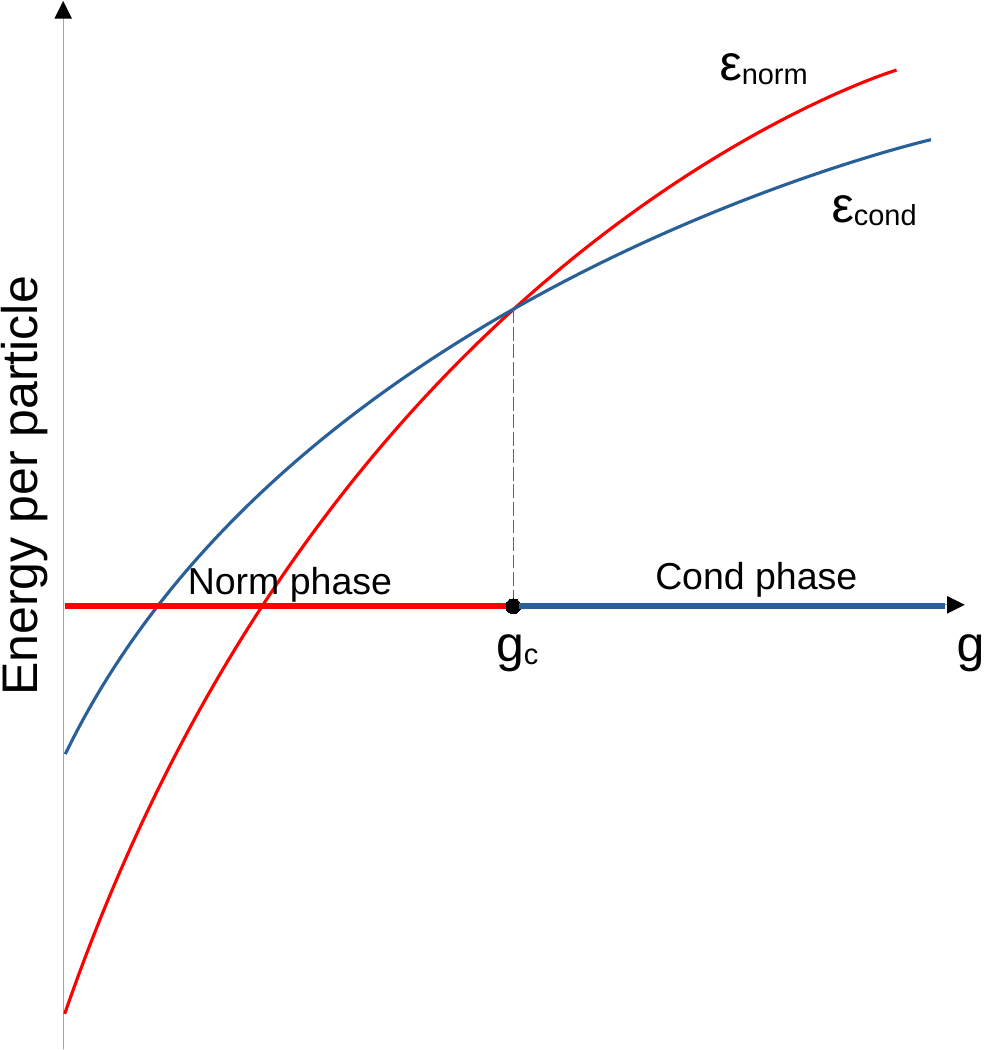}
    \caption {Illustration for Theorem \ref{theorem-main}. The red and
      blue continuous lines are, respectively, the energies per
      particle $\epsilon_\mathrm{norm}(g)$ and
      $\epsilon_\mathrm{cond}(g)$ depicted for $g\geq 0$.  The two
      lines intersect at the finite $g_\mathrm{c}$, and both are
      negative at small $g$ (due to $K$ nonpositive) and increase by
      increasing $g$ (due to $V$ nonnegative).}
    \label{sketch_V_pos}
  \end{figure}

  Condition (i) is equivalent to say that
  $\min k_\mathrm{norm} < \min k_\mathrm{cond}$.  This inequality,
  however, is granted by Lemma~\ref{lemma} whose hypotheses are
  satisfied again choosing
  $\max v_\mathrm{cond} < \overline{v}_\mathrm{cond}$, i.e.,
  $\max v_\mathrm{cond} \leq \overline{v}_\mathrm{cond}-\varepsilon$,
  with $\varepsilon$ positive arbitrarily small.
 
  Condition (ii) is equivalent to say that
  $\min v_\mathrm{cond} < \min v_\mathrm{norm}$.  Since
  $\min v_\mathrm{cond} = \min v$ and
  $\min v_\mathrm{norm} = \max v_\mathrm{cond}$, the latter inequality
  amounts to require $\max v_\mathrm{cond} > \min v$, i.e.,
  $\max v_\mathrm{cond}\geq \underline{v}_\mathrm{cond} +
  \varepsilon'$, with $\varepsilon'$ positive arbitrarily small.
  
  In conclusion, the existence of any partition
  $\mathcal{H}=\mathcal{H}_\mathrm{cond} \oplus
  \mathcal{H}_\mathrm{norm}$, obtained by choosing
  $\underline{v}_\mathrm{cond} + \varepsilon' \leq \max
  v_\mathrm{cond} \leq \overline{v}_\mathrm{cond}-\varepsilon$, with
  $\varepsilon$ and $\varepsilon'$ positive arbitrarily small, allows
  us to claim that Eq.~(\ref{QPT0}) is satisfied and that
  Eq.~(\ref{QCP}) admits at least a finite solution $g_\mathrm{c}$
  where, in the $\TDlim$, the system (\ref{H}) undergoes a QPT.

  The rest of the proof is devoted to the construction of upper and
  lower bounds of $g_\mathrm{c}$. To this end, we shall again exploit
  the invariance of the $\TDlim$~(\ref{gc_gcross}) under different
  partitions
  $\mathcal{H}=\mathcal{H}_\mathrm{cond} \oplus
  \mathcal{H}_\mathrm{norm}$.
 
  At any finite size of the system, since $E_\mathrm{cond}$ and
  $E_\mathrm{norm}$ are concave functions of $g$, we have
  \begin{align}
    \label{gcross_bounds}
    g^-_\mathrm{cross} \leq g_\mathrm{cross} \leq g^+_\mathrm{cross},
  \end{align}
  where $g^+_\mathrm{cross}$ is the intersection point of two curves
  which are, respectively, a majorant of $E_\mathrm{cond}$ and a
  minorant of $E_\mathrm{norm}$, whereas $g^-_\mathrm{cross}$ is the
  intersection point of two curves which are, respectively, a minorant
  of $E_\mathrm{cond}$ and a majorant of $E_\mathrm{norm}$ (see
  Fig.~\ref{sketch_V_neg_with_boundings} for an illustration).
  \begin{figure}
    \centering \includegraphics[width=0.50\columnwidth,clip]%
    {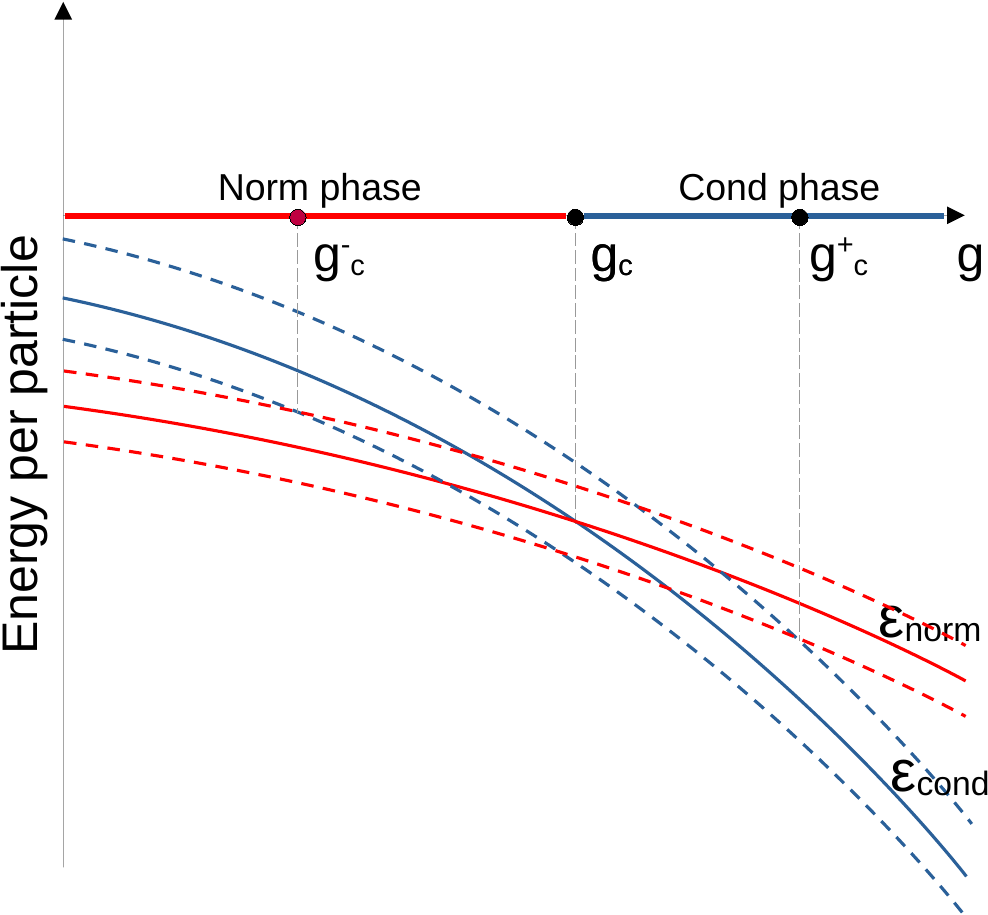}
    \caption {Illustration for the proof Theorem
      \ref{theorem-main}. The red and blue continuous lines are the
      energies per particle $\epsilon_\mathrm{norm}(g)$ and
      $\epsilon_\mathrm{cond}(g)$, respectively, while their close
      dashed lines represents majorants and minorants of them.}
    \label{sketch_V_neg_with_boundings}
  \end{figure}
  Indicating with $g^\pm_\mathrm{c}$ the $\TDlim$s of
  $g^\pm_\mathrm{cross}$, we then have
  $g^-_\mathrm{c} \leq g_\mathrm{c} \leq g^+_\mathrm{c}$. The more
  accurate are the approximations to $E_\mathrm{cond}$ and
  $E_\mathrm{norm}$, the tighter are the bounds
  $g^\pm_\mathrm{c}$. However, we also want to choose these
  approximations to $E_\mathrm{cond}$ and $E_\mathrm{norm}$
  sufficiently simple to allow for an analytical evaluation of the
  $\TDlim$ of $g^\pm_\mathrm{cross}$.

  Let us examine the following inequalities
  \begin{align}
    \label{Econd+}
    E_\mathrm{cond} (g)
    &\leq g \min V_\mathrm{cond},
    \\
    \label{Enorm-}
    E_\mathrm{norm}(g)
    &\geq \min K_\mathrm{norm} + g \min V_\mathrm{norm},
  \end{align}
  and
  \begin{align}
    \label{Econd-}
    E_\mathrm{cond}(g)
    &\geq \min K_\mathrm{cond} + g \min V_\mathrm{cond},
    \\
    \label{Enorm+}
    E_\mathrm{norm}(g)
    &\leq \min K_\mathrm{norm} + g \max V_\mathrm{norm}.
  \end{align}
  Equations~(\ref{Enorm-}), (\ref{Econd-}) and (\ref{Enorm+})
  correspond to Weyl's inequalities~\cite{MatrixTheory} for the lowest
  eigenvalue of $H=K+gV$ restricted to the condensed and normal
  subspaces.  Equation~(\ref{Econd+}) follows from
  $E_\mathrm{cond} \leq \braket{u}{H}{u}/\scp{u}{u}$,
  $\forall\ket{u}\in\mathcal{H}_\mathrm{cond}$, choosing
  $\ket{u}=\ket{n}$, where $\ket{n}$ is any GS of $V$, and observing
  that $\braket{n}{K}{n}=0$.  From the first and second pair of
  inequalities we obtain, respectively,
  \begin{align}
    \label{gcross+}
    g^+_\mathrm{cross}
    = \frac{-\min K_\mathrm{norm}}
    {\min V_\mathrm{norm}-\min V_\mathrm{cond}},
  \end{align}
  \begin{align}
    \label{gcross-}
    g^-_\mathrm{cross}
    = \frac{\min K_\mathrm{cond}-\min K_\mathrm{norm}}
    {\max V_\mathrm{norm}-\min V_\mathrm{cond}}.
  \end{align}

  Consider Eq.~(\ref{gcross+}), we look for a choice of
  $\mathcal{H}_\mathrm{cond}$ which makes the value of this finite
  size majorant of $g_\mathrm{cross}$ as small as possible and then
  take its $\TDlim$.  The numerator is invariant under a change of the
  Hilbert space partition as, according to Lemma \ref{lemma},
  $\TDlim \min K_\mathrm{norm}/N_\mathrm{p} = \TDlim \min K
  /N_\mathrm{p}$.  In the denominator, $\min V_\mathrm{cond} = \min V$
  is invariant too but we can maximize $\min V_\mathrm{norm}$ by
  taking $\mathcal{H}_\mathrm{cond}$ to have the largest possible
  dimension, namely,
  $\TDlim\min V_\mathrm{norm}/N_\mathrm{p} = \TDlim \max
  V_\mathrm{cond}/N_\mathrm{p} \to \overline{v}_\mathrm{cond}$.  In
  the $\TDlim$ we thus obtain the following smallest majorant of
  $g_\mathrm{c}$
  \begin{align}
    \label{gc+}
    g^+_\mathrm{c} = \frac{-\min k}
    {{\overline{v}_\mathrm{cond}}-\min v}.
  \end{align} 

  Consider Eq.~(\ref{gcross-}), we now look for a choice of
  $\mathcal{H}_\mathrm{cond}$ which makes the value of this finite
  size minorant of $g_\mathrm{cross}$ as large as
  possible~\cite{note_independent_choices}.  The denominator is
  invariant under a change of the Hilbert space partition, we have
  $\max V_\mathrm{norm} = \max V$ and $\min V_\mathrm{cond} = \min V$.
  In the numerator, $\min K_\mathrm{norm}$ is invariant too, as Lemma
  \ref{lemma} provides
  $\TDlim \min K_\mathrm{norm}/N_\mathrm{p} = \TDlim \min K
  /N_\mathrm{p}$.  It remains to choose $\mathcal{H}_\mathrm{cond}$ so
  as to maximize $\min K_\mathrm{cond}$. This is obtained by assuming
  for $\mathcal{H}_\mathrm{cond}$ the smallest size, which amounts to
  have
  $\min K_\mathrm{cond}/N_\mathrm{p} \to
  \underline{k}_\mathrm{cond}$. In the $\TDlim$ we thus obtain the
  following largest minorant of $g_\mathrm{c}$
  \begin{align}
    \label{gc-}
    g^-_\mathrm{c} =
    \frac{{\underline{k}_{\mathrm{cond}}}-\min k}{\max v - \min v}.
  \end{align} 
  Note that, in general, $\underline{k}_\mathrm{cond}$ might be
  finite. Moreover, according to the above condition (i),
  $\underline{k}_\mathrm{cond} = \epsilon_\mathrm{cond}(0) >
  \epsilon_\mathrm{norm}(0) = \min k$ so that $g^-_\mathrm{c}>0$.
\end{proof}

A few major comments are in order with respect to what stated in
Ref.~\cite{WC_QPT}.
  
i) In present Theorem \ref{theorem-main}, at no point we have assumed
the functions $\epsilon(g)$, $\epsilon_\mathrm{cond}(g)$, and
$\epsilon_\mathrm{norm}(g)$ to be monotone; only their concavity
matters~\cite{note_monotone}.
  
ii) In present Theorem \ref{theorem-main} there is no restriction on
the nature of the QPT, being possible both first- as well as
second-order QPTs; in the latter case, the critical point is such that
$\epsilon_\mathrm{norm}(g)$ and $\epsilon_\mathrm{cond}(g)$ cross each
other but share the same tangent~\cite{note_example}.

iii) For all models where $\min V$ has degeneracy at most
$\mathop{o}(N_\mathrm{p})$ (as for the model considered in
Ref.~\cite{WC_QPT}), we have $\underline{k}_{\mathrm{cond}}=0$;
however, there exist models where this is not the case, as for the
model of a system of fermions in a one-dimensional inhomogeneous
lattice analyzed in Refs.~\cite{QPTA}.

iv) In Ref.~\cite{WC_QPT}, we have arbitrarily assumed
$\overline{v}_\mathrm{cond}=\media{v}$, which, in general, may not be
correct.  However, by using
$\overline{v}_\mathrm{cond}\leq \media{v} \leq \max v$, the following
result renders the assumption of Ref.~\cite{WC_QPT} effective.
\begin{corollary}
  \label{corollary-main}
  Under the hypotheses of theorem~\ref{theorem-main}
  \begin{align}
    \label{corollaryEq1}
    g^+_\mathrm{c} \geq \tilde{g}^+_\mathrm{c} \geq
    \tilde{g}^-_\mathrm{c} \geq g^-_\mathrm{c},
  \end{align}
  where
  \begin{align}
    \label{corollaryEq2}
    &\tilde{g}^+_\mathrm{c} = \frac{-\min k}{{\media{v}}-\min v},
    \\
    \label{corollaryEq3}
    &\tilde{g}^-_\mathrm{c}= \frac{\underline{k}_\mathrm{cond}-\min k}
      {\media{v} - \min v}. 
  \end{align}
\end{corollary}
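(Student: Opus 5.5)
The chain \eqref{corollaryEq1} consists of three separate inequalities. I would prove each one by comparing numerators and denominators of the explicit formulas in Theorem~\ref{theorem-main} and \eqref{corollaryEq2}--\eqref{corollaryEq3}. Three sign facts are needed first.

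\emph{Sign facts.} First, $\min k<0$. Since $K$ has null diagonal elements in the eigenbasis of $V$, we have $\mathrm{tr}\,K=0$, so $\min K\le 0$; the non-trivial kinetic term and the per-particle scaling give $\min k<0$. Second, $\underline{k}_\mathrm{cond}-\min k>0$, which is condition (i) established in the proof of Theorem~\ref{theorem-main}. Third, the ordering
\begin{align}
  \min v<\overline{v}_\mathrm{cond}\leq \media{v}\leq \max v .
\end{align}
Here $\overline{v}_\mathrm{cond}\le\media{v}$ was noted just before Lemma~\ref{lemma}, as a consequence of self-averaging and Definition~\ref{upperF}. The inequality $\media{v}\le\max v$ holds because a mean does not exceed a maximum. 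Strictness $\overline{v}_\mathrm{cond}>\min v$ is implicit in the nonempty window $\underline{v}_\mathrm{cond}<\max v_\mathrm{cond}\le\overline{v}_\mathrm{cond}$ used in the theorem's proof. All the denominators are therefore positive.

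\emph{First inequality,} $g^+_\mathrm{c}\ge\tilde g^+_\mathrm{c}$. Both quantities share the positive numerator $-\min k$. Their denominators satisfy $0<\overline{v}_\mathrm{cond}-\min v\le\media{v}-\min v$, and a fixed positive numerator over a larger positive denominator gives a smaller quotient.

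\emph{Last inequality,} $\tilde g^-_\mathrm{c}\ge g^-_\mathrm{c}$. Both share the positive numerator $\underline{k}_\mathrm{cond}-\min k$, and their denominators satisfy $\media{v}-\min v\le\max v-\min v$, so the same monotonicity argument applies.

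\emph{Middle inequality,} $\tilde g^+_\mathrm{c}\ge\tilde g^-_\mathrm{c}$. This is the main obstacle: the two quantities share a denominator, so the claim reduces to $-\min k\ge \underline{k}_\mathrm{cond}-\min k$, i.e.\ $\underline{k}_\mathrm{cond}\le 0$. I would prove this by evaluating the trace of $K$ restricted to $\mathcal{H}_\mathrm{cond}$ in the configuration basis. Every diagonal element $\braket{n}{K}{n}$ vanishes, so the restricted trace is zero. The lowest eigenvalue of a traceless Hermitian operator is at most zero, so $\min K_\mathrm{cond}\le 0$ for every size and every choice of partition. Dividing by $N_\mathrm{p}$ and taking the $\TDlim$ gives $\underline{k}_\mathrm{cond}\le0$.

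This step is the only one that uses the structure of $K$, namely its zero diagonal, rather than pure ordering of scalars. It should be stated explicitly so that the chain does not silently rely on the sign of $\underline{k}_\mathrm{cond}$. Combining the three inequalities yields \eqref{corollaryEq1}.
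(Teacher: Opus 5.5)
Your proposal is correct and follows the paper's route. The paper justifies the chain only by citing $\overline{v}_\mathrm{cond}\le\media{v}\le\max v$, i.e.\ comparing denominators of fractions with common nonnegative numerators, and leaves the middle inequality $\underline{k}_\mathrm{cond}\le 0$ implicit; your zero-diagonal argument supplies it correctly (it is the same fact as Eq.~(\ref{Econd+}) at $g=0$, since $\braket{n}{K}{n}=0$ for every configuration in $\mathcal{H}_\mathrm{cond}$), and note that you only need $\min k\le 0$ rather than the hand-waved strict inequality, which in any case follows from condition (i) together with $\underline{k}_\mathrm{cond}\le 0$.
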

For systems where $\underline{k}_\mathrm{cond}=0$, as in the model of
Ref.~\cite{WC_QPT}, as well as in any model where $\min V$ has
degeneracy at most $\mathop{o}(N_\mathrm{p})$, the quantities
$\tilde{g}^-_\mathrm{c}$ and $\tilde{g}^+_\mathrm{c}$ assume the same
value, which suggests the following approximated formula for the
critical point
\begin{align}
  \label{gcapprox}
  g_\mathrm{c} \approx \frac{-\min k}{{\media{v}}-\min v}. 
\end{align}
While the accuracy of this formula will be discussed in the examples
illustrated in the next Section, we stress the importance of its
simplicity. To estimate $g_\mathrm{c}$, it does not require the
evaluation of $\overline{v}_\mathrm{cond}$, which, in general, remains
analytically impossible and numerically non trivial.

\section{Applications}
Below we provide a few examples to show how Theorem \ref{theorem-main}
and Corollary \ref{corollary-main} apply.  We anticipate that all
these models have self-averaging potentials $V$ (as mentioned earlier
this is in fact a quite common feature of extensive Hamiltonian
systems), which, according to the first part of Theorem
\ref{theorem-main}, implies the existence of at least a QPT (first- or
second-order). Moreover, the potentials $V$ of these examples have
trivial minimal and maximal eigenvalues, allowing for an analytical
application of Eqs.~(\ref{ResultC}) and Eq.~(\ref{gcapprox}).

\subsection{Grover model: an exact case}
Consider the spin Hamiltonian $H=K+gV$, called Grover's
model~\cite{Grover,Bennet,QPTA}, where
\begin{align}
  \label{GroverH}
  K=-\sum_{i=1}^N\sigma_i^x, \qquad V=-N|n_0\rangle\langle n_0|.
\end{align}
Here, $\sigma_i^x$ and $\sigma_i^z$ are Pauli matrices.  In this
model, in which $N_\mathrm{p}=N$, we have $M=2^N$ and
$\mathcal{H}=\Span \{\ket{s_1} \otimes \dots \otimes\ket{s_N}\}$,
where $|s_i\rangle$, with $s_i=\pm 1$, are the eigenstates of the
Pauli matrix $\sigma_i^z$ and $V_n=0$ for any configuration
$\ket{n}=\ket{s_1} \otimes \dots \otimes\ket{s_N}$ except for the
configuration $\ket{n_0}=\ket{-1}^{\otimes N}$, where $V_{n_0}=-N$.
In the thermodynamic limit, this model can be exactly
solved~\cite{QPTA}; it is however instructive to see how Theorem
\ref{theorem-main} and Corollary \ref{corollary-main} apply.  In fact,
this model provides a rare case in which even
$\overline{v}_\mathrm{cond}$ can be calculated analytically.

Given $N$, we have only two possible choices of
$\max V_\mathrm{cond}$: either $\max V_\mathrm{cond}=-N $ or
$\max V_\mathrm{cond}=0$; and it is immediate to check that $V$ is
self-averaging with $\media{v}=0$. However, only the former choice,
i.e., $\mathcal{H}_\mathrm{cond}=\Span\{\ket{n_0}\}$, is compatible
with the condition $\max v_\mathrm{cond}<\media{v}$.  For it, we have
$\min k_\mathrm{norm}=-1$ and $\min k_\mathrm{cond}=0$, namely,
$\epsilon_{\mathrm{norm}}(0)<\epsilon_{\mathrm{cond}}(0)$.  Hence,
according to Definitions \ref{upperF} and \ref{lowerF}, here we are in
a situation where $\overline{v}_\mathrm{cond}=\media{v}$ and
$\underline{k}_\mathrm{cond}=0$.  Finally, by using also $\min v=-1$,
$\max v=0$, and $\min k=-1$, Eqs.~(\ref{ResultC}) provide the exact
result $g_\mathrm{c}^{-}=g_\mathrm{c}^{+}=g_\mathrm{c}=1$,
consistently with Ref.~\cite{QPTA}, where the analysis was made by the
explicit calculation of $\epsilon_{\mathrm{cond}}$ and
$\epsilon_{\mathrm{norm}}$, which also showed that the QPT is of
first-order.  Alternatively, for $g_\mathrm{c}$ we could directly use
the formula (\ref{gcapprox}) which here also coincides with the exact
result.

\subsection{Ising model in a transverse field: an example of a
  second-order QPT}
Let us consider the 1D Ising model ($N_\mathrm{p}=N$) with a
transverse field and open boundary conditions (OBC).  The operators
$K$ and $V$ in the dimensionless Hamiltonian $H=K+gV$ are
\begin{align}
  \label{Ising}
  K=-\sum_{i=1}^N\sigma_i^x, \qquad
  V=-\sum_{i=1}^{N}\sigma_i^z\sigma_{i+1}^z,
\end{align}
where $\sigma_i^x$ and $\sigma_i^z$ are Pauli matrices.  As well
known, this model is exactly solvable~\cite{Pfeuty} with
\begin{align}
  \label{Ising1}
  \epsilon(g)=-\frac{1}{2\pi}\int_{-\pi}^\pi dq
  \left[1+2g\cos(q)+g^2\right]^{\frac{1}{2}},
\end{align}
which implies a singularity of second-order at the critical point
$g_\mathrm{c}=1$.  The eigenvalues of $V$ have not fixed sign. For
example, for the two-fold degenerate GS of $V$, i.e., the two states
with all the spins parallel along the $z$-direction, we have
$\min V=-(N-1)$, while for the two most excited states made by
alternated spin directions with have $\max V=(N-1)$.  Starting with
the two GS states of $V$ having all spins parallel, we can achieve the
$2(N-1)$ degenerate first excited eigenstates of $V$ by reversing the
direction of all the spins having index $i=2,\ldots,N$, or by
reversing the direction of all the spins having index $i=3,\ldots,N$,
and so on.  For all these $2(N-1)$ states $|n\rangle$, obtained by the
insertion of a ``cut'' between contiguous parallel spins between the
sites $i-1$ and $i$, we have $V_n=-(N-1-2)$.  Similarly, by inserting
two cuts, which can be done in $2(N-1)(N-2)/2!$ ways, we have
$V_n=-(N-1-4)$; and so on.  We can therefore express the distinct
eigenvalues of $V$, $V^{(1)}\leq V^{(2)} \leq \ldots $, in terms of
the number of cuts $k=0,1,\ldots,N-1$ as follows
\begin{align}
  \label{Ising3}
  V^{(k)}=-(N-1-2k), \qquad \mathrm{deg}(k)=2\binom{N-1}{k},
\end{align}
where $\mathrm{deg}(k)$ stands for degeneracy of the eigenvalue
$V^{(k)}$~\cite{Note3}.  Equation (\ref{Ising3}) provides the
distribution of the eigenvalues of $V$ as
\begin{align}
  \label{Ising4}
  P\left(V^{(k)}\right)=\frac{1}{2^{N-1}} \binom{N-1}{k},
  \qquad k \in\{0,\ldots,N-1\}.
\end{align}
For $N$ large, by using $N-1 \simeq N$ and the Stirling approximation,
we get
\begin{align}
  \label{Ising5}
  P\left(V^{(k)}\right)\simeq \frac{1}{2^N}\sqrt{\frac{N}{2\pi(N-k) k}}
  ~e^{-N\left[x\log (x)+(1-x)\log(1-x)\right]}, \qquad x=k/N.
\end{align}
The above equation shows that the distribution of the rescaled
eigenvalues $V^{(k)}/N$ satisfies the self-averaging property. More
precisely, the most probable rescaled eigenvalue occurs for
$x=k/N=1/2$, i.e., $V^{(k)}/N=-(N-2k)/N=0$, where
$P\left(V^{(k)}\right)=\sqrt{2/(\pi N)}$, while any other eigenvalue
is exponentially suppressed. In other words, $V$ is self-averaging
with $\TDlim \media{V}/N=0$ and Theorem \ref{theorem-main} guarantees
the existence of at least one QPT.  Summarizing, for the Ising model
we have $\min v=-1$, $\max v=1$, and $\media{v}=0$, while the
evaluation of $\overline{v}_\mathrm{cond}$ remains non trivial.
However, since we also have $\underline{k}_\mathrm{cond}=0$, we can
apply Eq.~(\ref{gcapprox}) which provides $g_\mathrm{c}\approx 1$,
consistently with the exact result $g_\mathrm{c}= 1$~\cite{Pfeuty}.
We can also evaluate the exact lower bound (\ref{gc-}) which gives
$g^-_\mathrm{c}= 1/2$. %
By using
$\overline{v}_\mathrm{cond} \approx (\min v + \media{v})/2=-1/2$, we
can approximate the exact upper bound (\ref{gc+}) as
$g^+_\mathrm{c} \approx 2$, consistent with $g_\mathrm{c} = 1$.
Finally, since we know that the model undergoes only one QPT and that
this is of second-order, this example confirms that --- as anticipated
--- Theorem \ref{theorem-main} covers also second-order QPTs.

\subsection{Fermions in a heterogeneous external field: an example
  with $\underline{k}_\mathrm{cond}\neq 0$}
Consider a system of $N_\mathrm{p}$ electrons undergoing first
neighbor hopping in a 1D lattice ring of $N$ sites with $N$ even.  The
fermions are subject to an attractive heterogeneous external field
acting only on the sites $i=1,\ldots,N/2$.  The Hamiltonian $H=K+gV$
describing this tight-binding model has therefore
\begin{align}
  \label{KFermions}
  K = -\sum_{i=1}^{N}\left(c^\dag_{i}c_{i+1}+c^\dag_{i+1}c_{i}\right),
  \qquad
  V = - \sum_{i=1}^{N/2}c^\dag_i c_i. 
\end{align}
where $c_i$, $i=1,\ldots,N$, are the fermionic annihilation operators
and obey periodic boundary conditions (PBC) $c_{i+N}=c_i$.  Similarly
to what we have seen in the case of the Ising model, it is possible to
show that $V$ is self-averaging, so that a QPT must exist.  It turns
out that it is of first-order~\cite{QPTA}.

Let us first focus on the half-filling case $N_\mathrm{p}/N=1/2$ (in
this case, the distribution of $V_n/N$ for finite $N$ is essentially
equal to that of the Ising model).  We have $\min v=-1$, $\max v=0$,
$\media{v}=-1/2$, and
$\min K = -2 \sin(\pi N_\mathrm{p}/N) / \sin(\pi/N)$, i.e.,
$\min k= -4/\pi$, while the evaluation of $\overline{v}_\mathrm{cond}$
remains non trivial.  However, since we also have
$\underline{k}_\mathrm{cond}=0$, we can apply Eq. (\ref{gcapprox})
which provides $g_\mathrm{c}\approx 8/\pi \simeq 2.55$, to be compared
with the result derived from an exact numerical analysis~\cite{QPTA},
$g_\mathrm{c}\simeq 4.0$.  The discrepancy between the two values is
certainly important but not dramatic, especially when considering the
simplicity of Eq. (\ref{gcapprox}).  We can also evaluate the exact
lower bound (\ref{gc-}) which gives
$g^-_\mathrm{c}= 4/\pi\simeq 1.28$. %
By using
$\overline{v}_\mathrm{cond} \approx (\min v + \media{v})/2=-3/4$, we
can approximate the exact upper bound (\ref{gc+}) as
$g^+_\mathrm{c} \approx 16/\pi\simeq 5.09$, consistent with
$g_\mathrm{c}\simeq 4.0$.

Finally, let us consider the case at filling $N_\mathrm{p}/N=1/4$. We
have $\min v=-1$, $\max v=0$, $\media{v}=-1/2$, and
$\min k= -4\sqrt{2}/\pi$. Here we cannot apply Eq. (\ref{gcapprox})
since $\underline{k}_\mathrm{cond}\neq 0$. In fact, the latter can be
determined by
$\min K = -2 \sin(\pi N_\mathrm{p}/N') /
\sin(\pi/N')+\mathop{O}(1)$~\cite{NoteBC}, with $N'=N/2$, resulting in
$\underline{k}_\mathrm{cond}= -4/\pi$.  On plugging these values into
the exact lower bound (\ref{gc-}) we obtain
$g^-_\mathrm{c}= 4(\sqrt{2}-1)/\pi$. As for the upper bound, by using
$\overline{v}_\mathrm{cond} \approx (\min v + \media{v})/2=-3/2$, we
can approximate the exact upper bound (\ref{gc+}) as
$g^+_\mathrm{c} \approx 16/\pi$, while Eq.~(\ref{corollaryEq2})
provides $\tilde{g}^+_\mathrm{c}= 8\sqrt{2}/\pi$.  The found values,
namely, $g^-_\mathrm{c}\simeq 0.53$ and $g^+_\mathrm{c} \simeq 5.09$
or $\tilde{g}^+_\mathrm{c}\simeq 3.60$, are bounds consistent with the
exact numerical result $g_\mathrm{c}\simeq 2.0$.

\section{On the number and nature of critical points}
Theorem \ref{theorem-main} does not specify the number of critical
points. The hypotheses of Theorem \ref{theorem-main} only imply that
the number of critical points is finite due to the asymptotic
linearity of $\epsilon_\mathrm{cond}(g)$ and
$\epsilon_\mathrm{norm}(g)$.  This can be easily seen as follows. For
$g\to\infty$ we have
$\epsilon_\mathrm{cond}(g)/g\to \TDlim \min
V_\mathrm{cond}/N_\mathrm{p}$ and similarly
$\epsilon_\mathrm{norm}(g)/g\to \TDlim \min
V_\mathrm{norm}/N_\mathrm{p}$, with
$\min v_\mathrm{cond} < \min v_\mathrm{norm}$. In other words,
$\epsilon_\mathrm{cond}(g)$ and $\epsilon_\mathrm{norm}(g)$ tend
asymptotically to two straight lines with different slopes.  This
implies that there exists a finite value $g_P$ above which
$\epsilon_\mathrm{cond}(g)$ and $\epsilon_\mathrm{norm}(g)$ cannot
cross each other. On the other hand, on a finite interval $[0,g_P]$,
two distinct analytic functions can cross each other at most a finite
number of times.

Theorem \ref{theorem-main} does not even specify the nature of a
critical point, that is, whether it is first- or second-order.  In
general, without any further specific information about the operators
$K$ and $V$, all kind of scenarios are possibles, with multiple
critical points of arbitrary nature.  In particular, although a
second-order critical point $g_\mathrm{c}$ requires the double
condition
$\epsilon_\mathrm{cond}(g_\mathrm{c})=\epsilon_\mathrm{norm}(g_\mathrm{c})$
and
$\epsilon'_\mathrm{cond}(g_\mathrm{c})=\epsilon'_\mathrm{norm}(g_\mathrm{c})$,
it is possible to construct specific model examples owning any number
of second-order critical points, see ~\ref{multiple}.

For illustrative purposes, in Figs.~\ref{sketch_V_gen_2} and
\ref{sketch_V_gen_4} we report two different scenarios for a case with
two critical points.

\begin{figure}[b]
  \centering
  \includegraphics[width=0.50\columnwidth,clip]{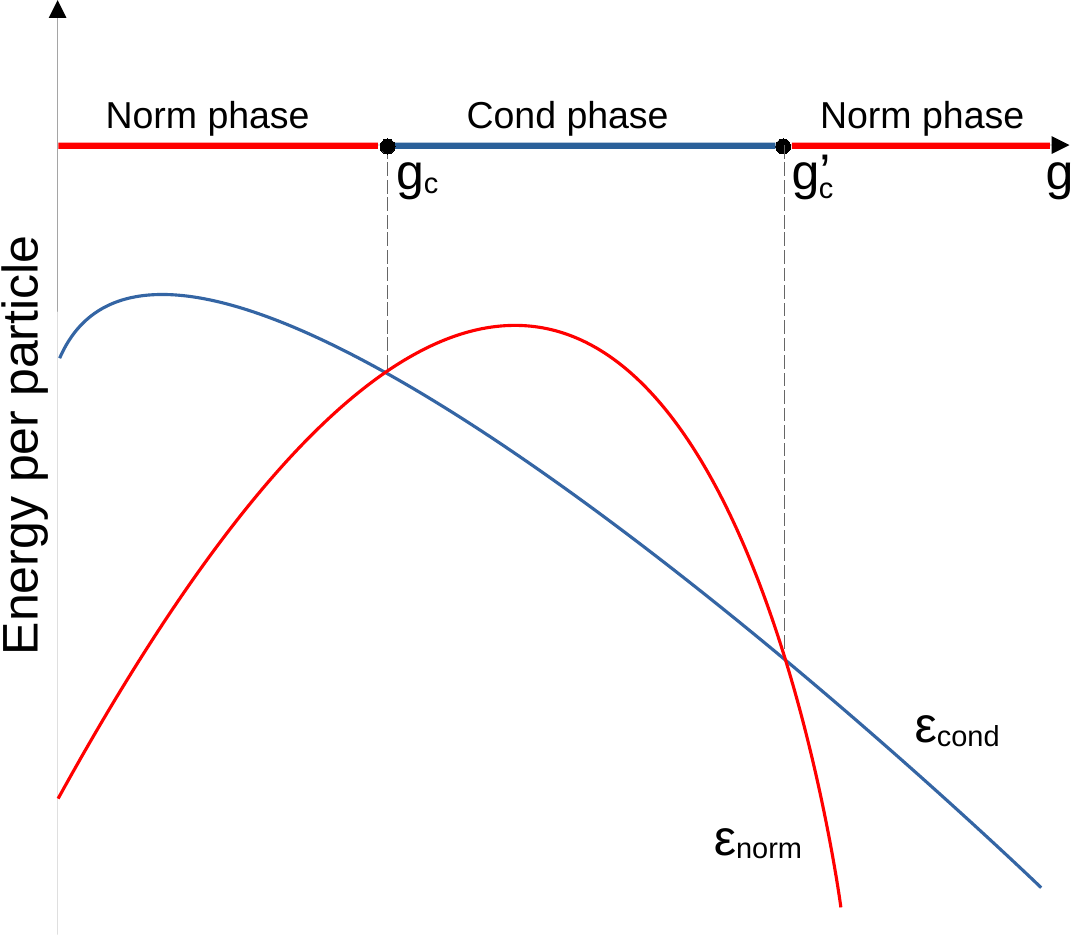}
  \caption {Sketch scenario of Theorem \ref{theorem-main} with two
    first-order QPTs taking place at the critical points
    $g_\mathrm{c}$ and $g'_\mathrm{c}$.  }
  \label{sketch_V_gen_2}
\end{figure}

 \begin{figure}[b]
   \centering
   \includegraphics[width=0.50\columnwidth,clip]{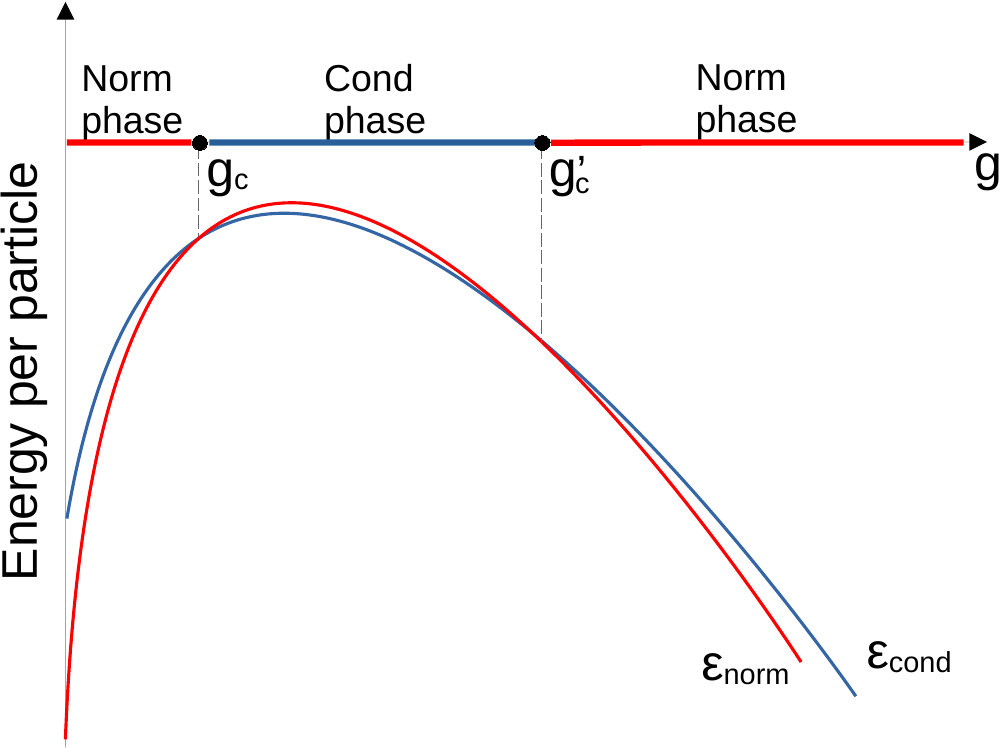}
   \caption {Sketch scenario of Theorem \ref{theorem-main} with a
     first-order QPT and a second-order QPT taking place at the
     critical points $g_\mathrm{c}$ and $g'_\mathrm{c}$, respectively.
   }
   \label{sketch_V_gen_4}
 \end{figure}

 \section{Conclusions}
 \label{conclusions}
 Our analysis suggests that, in lattice systems governed by a
 Hamiltonian $H=K+gV$ (where $K$ and $V$ are generic non-commuting
 operators that scale with system size in the thermodynamic limit),
 any QPT driven by the dimensionless parameter $g$ --- be it first- or
 second-order --- can be interpreted as a condensation in state
 space. In fact, the key property underlying this general behavior is
 the concavity of the ground-state energies of $H$ and its
 restrictions to the norm and cond subspaces. As shown in the proof of
 Theorem \ref{theorem-main} this property, when equipped with the
 self-averaging character of $V$, which is the case for most physical
 systems, leads to the existence of at least one critical point bound
 by quite simple formulas expressed in terms of a few parameters.

 \ack The National Council for Scientific and Technological
 Development (CNPq) and the Foundation for Research Support of the
 State of Bahia (FAPESB) are acknowledged for the productivity
 fellowship no. 302787/2025-9.
 
 %

 \appendix
 \section{Examples with multiple second-order critical points}
 \label{multiple}
 Below we report a step-by-step construction of two analytic, strictly
 concave functions $f(x)$ and $g(x)$ that share a second-order
 crossing at two points, for simplicity $x=0$ and $x=1$, yet have
 different asymptotic slopes $m_f$ and $m_g$ as $x \to \infty$.  The
 construction develops along three steps. At step 1, we define the
 difference function $h(x) = f(x) - g(x)$, which, due to the fact that
 $f$ and $g$ cross each other and share the same tangent at $x_1$ and
 $x_2$, must satisfy $h(x_1)=h(x_2)=0$ and $h'(x_1)=h'(x_2)=0$;
 moreover, we require that $h'(x)\to \Delta_m \neq 0$ as $x\to\infty$,
 where $\Delta_m=m_f-m_g$.  At step 2, by using $h$, we construct the
 individual functions $f$ and $g$ by choosing a suitable negative
 second derivative of $g$ such that $|g''|<|h''|$ which leads to the
 concavity of $f$ too. Finally, at step 3 we check that the above
 hypothesis are all satisfied and conclude.

 \subsection*{Step 1: Construct the difference function
   $h(x) = f(x) - g(x)$.}
 We need an analytic function $h(x)$ that has double roots at $x=0$
 and $x=1$ (meaning $h=0$ and $h'=0$ at these points), and whose slope
 approaches a non-zero constant $\Delta_m$ as $x \to \infty$.

 Let us define the second derivative of $h(x)$ as the everywhere
 analytic expression
 \begin{align*}
   h''(x) = (ax^2 + bx + c)e^{-x^2},
 \end{align*}
 where $a, b,$ and $c$ are constants.  We also define
 $h'(x) = \int_0^x h''(t)\,dt$ and $h(x) = \int_0^x h'(t)\,dt$.

 By the above definition, we automatically satisfy the first crossing
 point: $h(0) = 0$ and $h'(0) = 0$.  To force the second crossing
 point at $x = 1$, we require
 \begin{align*}
   &h'(1) = \int_0^1 (at^2 + bt + c)e^{-t^2} dt = 0,
   \\
   &h(1) = \int_0^1 ~
     \int_0^x (at^2 + bt + c)e^{-t^2} dt ~ \,dx = 0.
 \end{align*}
 These are two simple linear constraints on our three parameters
 $a, b,$ and $c$. We have a 3-dimensional parameter space and only 2
 constraints, meaning there are infinitely many non-trivial solutions
 where $a, b,$ and $c$ are not all zero.

 By choosing one such non-trivial solution, we get an analytic
 function $h(x)$ with double roots at $x=0$ and $x=1$. Furthermore, as
 $x \to \infty$,
 $h'(x) \to \int_0^\infty (at^2 + bt + c)e^{-t^2} dt = \Delta_m$.  By
 carefully picking from our infinite solutions, we can easily ensure
 $\Delta_m \neq 0$. Because $h''(x)$ decays exponentially, $h(x)$
 rapidly approaches a straight line $y = \Delta_m ~ x + q$.

 \subsection*{Step 2: Construct the individual concave functions
   $f(x)$ and $g(x)$.}
 We now need to split $h(x)$ into $f(x) - g(x)$ such that
 $f''(x) \le 0$ and $g''(x) \le 0$ everywhere.

 Because $h''(x) = (ax^2 + bx + c)e^{-x^2}$ decays exponentially, its
 magnitude is strictly bounded. We can find a sufficiently large
 positive constant $K$ such that for all $x$
 \begin{align*}
   |h''(x)| < K e^{-x^2/2}.
 \end{align*}
 Now, define the second derivative of $g(x)$ as
 \begin{align*}
   g''(x) = -K e^{-x^2/2}.
 \end{align*}
 Since $-K e^{-x^2/2} < 0$ everywhere, $g(x)$ (obtained by double
 integration) is strictly concave and analytic.  Finally, define
 $f(x) = g(x) + h(x)$. Its second derivative is
 \begin{align*}
   f''(x) = g''(x) + h''(x) = -K e^{-x^2/2} + h''(x).
 \end{align*}
 Because we chose $K$ to be large enough to dominate the maximum
 positive swing of $h''(x)$, we are guaranteed that $f''(x) < 0$
 everywhere. Thus, $f(x)$ is also strictly concave and analytic.

 \subsection*{Step 3: Verification of asymptotes.}
 The two functions approach straight lines, in fact as $x \to \infty$
 we have
 \begin{align*}
   &g'(x) = \int_0^x -K e^{-t^2/2} dt \to -K \sqrt{\pi/2} = m_g,
   \\
   &f'(x) = g'(x) + h'(x) \to m_g + \Delta_m = m_f.
 \end{align*}
 Because both $f''(x)$ and $g''(x)$ decay exponentially, their
 integrals converge absolutely, meaning the functions $f(x)$ and
 $g(x)$ do not just achieve constant slopes, but they geometrically
 converge to straight-line asymptotes $y = m_f x + c_f$ and
 $y = m_g x + c_g$, respectively.  Since $\Delta m \neq 0$, their
 asymptotic slopes are strictly different ($m_f \neq m_g$).

 In conclusion, $f(x)$ and $g(x)$ are both analytic, globally concave,
 tend asymptotically to straight lines with different slopes, and
 cross exactly twice at $x=0$ and $x=1$ with shared tangents.

 The above scheme for two second-order crossing points can be easily
 generalized to any number of second-order crossing points.  For
 example, for a case with three second-order crossing points at the
 values $x=0$, $x=1$, and $x=2$, the function $h''(x)$ will be defined
 as a Gaussian centered in $x=0$ times a polynomial of fourth degree.
 Four of the five coefficients of this polynomial will be used
 to satisfy the second-order crossings at the points $x=1$ and $x=2$,
 leaving a free coefficient to be chosen as to ensure
 $\Delta_m\neq 0$.  The rest of the construction repeats as in the
 above case with two second-order crossing points.


\vspace{0.2cm}
    
 \begin{thebibliography}{99}%

 \bibitem{SGCS} S.~L.~Sondhi, S.~M.~Girvin, J.~P.~Carini, and
   D.~Shahar, \textit{Continuous quantum phase transitions},
   Rev. Mod. Phys. \textbf{69}, 315 (1997).
  
 \bibitem{KB} T.~R.~Kirkpatrick and D.~Belitz, \textit{Quantum phase
     transitions in electronic systems}, in \textit{Electron
     Correlations in the Solid State} ed. by N.~H.~March, (Imperial
   College Press, London 1999).

 \bibitem{Vojta} T.~Vojta, \textit{Quantum phase transitions in
     electronic systems}, Ann. Phys. (Leipzig) \textbf{9}, 403 (2000).

 \bibitem{Sachdev} S.~Sachdev, \textit{Quantum Phase Transitions}
   (Cambridge University Press, Cambridge 2000).

 \bibitem{Fidelity} H.~T.~Quan and F.~M.~Cucchietti, \textit{Quantum
     fidelity and thermal phase transitions}, Phys. Rev. E
   \textbf{79}, 031101 (2009).
  
 \bibitem{Carr} L.~D.~Carr, \textit{Understanding Quantum Phase
     Transitions}, (CRC Press, Taylor \& Francis 2010).

 \bibitem{LeeYang} R. A. Blythe and M. R. Evans, \textit{The Lee-Yang
     theory of equilibrium and nonequilibrium phase transitions},
   Braz. J. Phys. \textbf{33}, 464 (2003).
  
 \bibitem{QPTA} M.~Ostilli and C.~Presilla, \textit{First-order
     quantum phase transitions as condensations in the space of
     states}, J. Phys. A: Math. Theor. \textbf{54}, 055005 (2021).

 \bibitem{WC_QPT} M.~Ostilli and C.~Presilla, \textit{Wigner
     crystallization of electrons in a one-dimensional lattice: a
     condensation in the space of states},
   Phys. Rev. Lett. \textbf{127}, 040601 (2021).

 \bibitem{FTQPT} M. Ostilli and C. Presilla,
   \textit{Finite-temperature quantum condensations in the space of
     states: general proof}, J. Phys. A: Math. Theor. \textbf{55}
   505004 (2022).
  
 \bibitem{FTQPT2} M. Ostilli and C. Presilla,
   \textit{Finite-temperature quantum condensations in the space of
     states: A different perspective on quantum annealing},
   Phys. Rev. A \textbf{108}, 022205 (2023).

 \bibitem{Wigner} E.~Wigner, \textit{On the Interaction of Electrons
     in Metals}, Phys. Rev. \textbf{46}, 1002 (1934).

 \bibitem{Hubbard1978} J.~Hubbard, \textit{Generalized Wigner lattices
     in one dimension and some applications to
     tetracyanoquinodimethane(TCNQ) salts}, Phys. Rev. B \textbf{17},
   494 (1978).
  

 \bibitem{Note} The adjective ``classical'' here refers to the fact
   that we are considering the mean of the eigenvalues of $V$ while
   any quantum mechanical expectation of $V$ depends also on the state
   in which it is taken.

 \bibitem{EPR} M. Beccaria, C. Presilla, G. F. De Angelis, G. Jona
   Lasinio, \textit{An exact representation of the fermion dynamics in
     terms of Poisson processes and its connection with Monte Carlo
     algorithms}, Europhys. Lett. \textbf{48}, 243 (1999).

 \bibitem{ANAL} M. Ostilli and C. Presilla, \textit{An analytical
     probabilistic approach to the ground state of lattice quantum
     systems: exact results in terms of a cumulant expansion}, J.
   Stat. Mech., P04007 (2005).

 \bibitem{ANAL2} M. Ostilli and C. Presilla, \textit{The Exact ground
     state for a class of matrix Hamiltonian models: quantum phase
     transition and universality in the thermodynamic limit},
   J. Stat. Mech., P11012 (2006).
  
 \bibitem{MatrixTheory} J.~N.~Franklin, \textit{Matrix Theory}, (Dover
   Publications, 1993).

 \bibitem{note_independent_choices} Note that the choices of
   $\mathcal{H}_\mathrm{cond}$ for Eqs.~(\ref{gcross+}) and
   ~(\ref{gcross-}) are independent.
    
 \bibitem{note_monotone} A sufficient condition for $\epsilon(g)$,
   $\epsilon_\mathrm{cond}(g)$, and $\epsilon_\mathrm{norm}(g)$ to be
   monotone non decreasing (non increasing) is that $V_n\geq 0$
   ($V_n\leq 0$) for any $n$, as can be seen using the Hellman-Feynman
   theorem.

 \bibitem{note_example} It is easy to construct cases where two
   concave analytic functions cross at some point where they also
   share same tangent.  Consider for example the functions
   $\epsilon_\mathrm{cond}(g)=\log(1+g) +a g$ and
   $\epsilon_\mathrm{norm}(g)=\epsilon_\mathrm{cond}(g)-\delta(g-1)^3e^{-g}$,
   where $a$ is arbitrary and $\delta$ sufficiently small.  It is then
   easy to check that, for $\delta<1/4$, $\epsilon_\mathrm{cond}(g)$
   and $\epsilon_\mathrm{norm}(g)$ are concave and that
   $\epsilon_\mathrm{cond}(1)=\epsilon_\mathrm{norm}(1)$ as well as
   $\epsilon'_\mathrm{cond}(1)=\epsilon'_\mathrm{norm}(1)=a+1/2$.
  
 \bibitem{Grover} L.~K.~Grover: {\em A fast quantum mechanical
     algorithm for database search\/}, arXiv:quant-ph/9605043.
  
 \bibitem{Bennet} C.~H.~Bennett, E.~Bernstein, G.~Brassard,
   U.~Vazirani, {\em The strengths and weaknesses of quantum
     computation\/}, SIAM Journal on Computing, \textbf{26} 1510
   (1997).

 \bibitem{Pfeuty} P.~Pfeuty, {\em The one-dimensional Ising model with
     a transverse field\/}, Ann. Phys. (N.Y.) \textbf{57}, 79 (1970).

 \bibitem{Note3} Interestingly, also the eigenvalues of the kinetic
   operator $K$ in Eq. (\ref{Ising}) have the same structure as in
   Eq. (\ref{Ising3}) provided $N-1$ is replaced by $N$, and $k$,
   taking the values $0,1,\ldots N$, is interpreted as the number of
   up spins in the eigenbasis of the Pauli operator $\sigma^x_i$.

 \bibitem{NoteBC} The $\mathop{O}(1)$ correction is due to the fact
   that $\underline{k}_\mathrm{cond}$ represents the GS energy of the
   operator $K$ restricted to the system within the first $N'=N/2$
   sites and with OBC, not PBC.
  
 \bibitem{note_conics} For example, one can take a circle and an
   ellipse of suitable sizes and positions such that they cross at
   three points and such that their local convex profile (convex
   hull), including the three crossing points, define a concave
   function.
    
 \end{thebibliography}
\end{document}